    \DeclareMathOperator{\tr}{tr}
\theoremstyle{plain}
\newtheorem{thm}{Theorem}
\newtheorem{prop}[thm]{Proposition}
\theoremstyle{definition}
\theoremstyle{remark}
\newtheoremstyle{specialcasestyle}{1mm}{1mm}{\upshape}{}{\bfseries\upshape}{.}{0mm}{}
\theoremstyle{specialcasestyle}
  \newcommand{\bT}{{\bf T}}
   \newcommand{\bR}{{\bf R}}
  \newcommand{\bA}{{\bf A}}
    \newcommand{\bI}{{\bf I}}
      \newcommand{\bh}{{\bf h}}
      \newcommand{\bH}{{\bf H}}
      \newcommand{\bx}{{\bf x}}
         \newcommand{\bu}{{\bf u}}
          \newcommand{\bz}{{\bf z}}
       \newcommand{\by}{{\bf y}}
       \newcommand{\bb}{{\bf b}}
      \newcommand{\bQ}{{\bf Q}}
      \newcommand{\bn}{{\bf n}}
     \newcommand{\minimize}[1]{{\underset{{#1}}{\mathrm{minimize}}}}
\begin{document}
\title{
Precoding Design for Single-RF Massive MIMO systems: A Large System Analysis (Extended Version)
}
\author{\IEEEauthorblockN{Houssem Sifaou, Abla Kammoun and Mohamed-Slim Alouini}
\\
\IEEEauthorblockA{
Computer, Electrical and Mathematical Sciences and Engineering, King Abdullah University of Science and Technology, Thuwal, Saudi Arabia\\}
}

\maketitle
\begin{abstract}
This work revisits a recently proposed precoding design for massive multiple-input multiple output (MIMO) systems that is based on the use of an instantaneous total power constraint. The main advantages of this technique lie in its suitability to single RF MIMO systems coupled with a very-high power efficiency. 
Such features have been proven using simulations for uncorrelated channels. Based on tools from random matrix theory, we propose in this work to analyze the performance of this precoder for more involved channels accounting for spatial correlation. The obtained expressions are then optimized in order to maximize the signal-to-interference-plus-noise ratio (SINR). Simulation results are provided in order to illustrate the performance of the optimized precoder in terms of
peak-to-average power ratio (PAPR) and  signal-to-interference-plus-noise ratio (SINR).
\end{abstract}

\begin{IEEEkeywords}
massive MIMO, peak-to-average power ratio (PAPR), single RF transmitter, random matrix theory.
\end{IEEEkeywords}

\IEEEpeerreviewmaketitle

\section{Introduction}
Large scale multiple-input multiple-output systems (MIMO), also known as massive MIMO systems, are considered as a promising technology in modern wireless networks\cite{Marzetta2010a, Hoydis2013a, Rusek2013a, Larsson2014}. By using a very large number of antennas and intelligent precoding schemes at the base station (BS), these systems can focus energy in small regions thereby reducing the inter-user interference and at the same time the energy consumption.

One important challenge in massive MIMO systems is to deal with  multiple-access interference (MAI). For downlink transmissions, MAI mitigation can be accomplished at the BS using precoding techniques.
The precoding design can be based on  average or instantaneous constraints. The use of instantaneous constraints is more interesting from a practical standpoint,  as it allows  to take into account the presence of power amplifiers. If each antenna is fed by a separate power amplifier, then a per-antenna power constraint should be used. In case  a single power amplifier is employed for all antennas, implying the use of a single radio frequency (RF) chain, the use of a total power constraint
becomes more appropriate.  

The latter case has recently been studied in  \cite{Sadaghat2014a},
wherein the proposed scheme was shown to exhibit a higher power efficiency as compared to its predecessors, in addition to a lower peak-to-average ratio. 
Nevertheless,  being based on the use of an instantaneous power constraint, the resulting precoder involves solving a fixed point equation, which does not facilitate the carrying out of performance analysis. This motivates our work. In particular, using tools from random matrix theory, we analyze the asymptotic performance of the scheme of  \cite{Sadaghat2014a}. This allows us to accurately approximate its performances under more involved channels considering the spatial
correlation, but more importantly, it provides us insights into the asymptotic behavior of the parameters intervening in the precoder of \cite{Sadaghat2014a}.

The remainder of the paper is organized as follows. The next section introduces the system model and formulates the problem. In section III, the optimal parameters of the precoding design are determined in the asymptotic regime. Simulations results are presented in section IV while some conclusions and implications are drawn in section V.

\section{System Model and Problem Formulation}
We consider a single cell massive MIMO system where a BS equipped with $M$ antennas communicates with $K<M$ single antenna UEs. The channel vector is modeled as
\begin{equation}
\bh_k=\bR^{\frac{1}{2}}\bz_k,\label{channel_model}
\end{equation}
where $\bR\in \mathbb{C}^M\times \mathbb{C}^M$ is the spatial correlation matrix and $\bz_k \sim \mathcal{CN}({\bf 0},{\bf I})$. We assume that $\bR$ is the same for all the UEs, an assumption which has been considered in several previous works \cite{Nam2014a}. For notational convenience, we define $c=\frac{K}{M}$.

The received signal vector $\by$ at all the UEs is: 
\begin{align}
\by=\bH^{\mbox{\tiny H}}\bx+\bn,
\end{align}
where $\bx$ is the precoded vector at the BS, $\bH=[\bh_1,\cdots,\bh_K]$ is the channel matrix and $\bn\sim\mathcal{CN}({\bf 0},\sigma^2{\bf I})$ stands for the additive white Gaussian noise. We assume that the BS performs precoding, implying that ${\bf x}$ is the precoding output of a data vector ${\bf u}$, before being transmitted via the channel.  We assume that the BS employs Gaussian codebooks, that is, ${\bf u}\sim\mathcal{CN}({\bf 0},{\bf I})$.
The instantaneous transmit power at the BS is given by
$$
P=\bx^{\mbox{\tiny H}}\bx.
$$
The output vector ${\bf x}$ of the precoding is designed in such a way to minimize a squared distortion measure. In particular, it is given as the solution of the following problem:
\begin{equation}
\begin{aligned}
\mathcal{P}:  \,\, \, &\minimize {\bx} \, \, \,\, \, \, \|\bH^{\mbox{\tiny H}}\bx-\sqrt{\gamma}\bu\|^2 \\
 \,\, \,&\mathrm{subject \,\, to} \, \, \,\, \, \,  \bx^{\mbox{\tiny H}}\bx\le P_a,
\end{aligned}
\label{eq:prob}
\end{equation}
where $P_a$ is the maximum allowed transmit power and $\gamma$ is a design parameter which will be optimized later.
The main difference between the precoding  solving \eqref{eq:prob} and the classical precoding techniques proposed  within the framework of  massive MIMO systems lies in the use of an instantaneous power constraint instead of an average power constraint. If the average power constraint were used, then the precoding would reduce to the ZF precoding. The precoding solving \eqref{eq:prob} has been determined in \cite{Sadaghat2014a}, where it has been proven that the solution is
as follows:  
Define
$$
\phi (\bx)= \| \bH^{\mbox{\tiny H}} \bx - \sqrt{\gamma}\bu\|^2=(\bx-\bb)^{\mbox{\tiny H}}\bA(\bx-\bb),
$$
where $\bb=\sqrt{\gamma} \bH(\bH^{\mbox{\tiny H}}\bH)^{-1}\bu$  and $\bA=\bH\bH^{\mbox{\tiny H}}$. Vector $\bb$ corresponds to the zero forcing (ZF) solution.
As reported in \cite{Sadaghat2014a}, by applying the method of Lagrange multipliers, it can be shown that $\bx$ is a stationary point of $\phi (\bx)$ if and only if 
$$
\bA(\bx-\bb)=- \delta \bx,
$$
where $\delta\ge 0$ is the Lagrange multiplier. Let ${\bf x}^{*}$ denotes the optimal solution of \eqref{eq:prob}. Then, two cases should be considered:

{\bf Case 1:} If $\bb^{\mbox{\tiny H}}\bb \le P_a$, then $\delta=0$ and $\bx^\star=\bb$.

{\bf Case 2:} If $\bb^{\mbox{\tiny H}}\bb > P_a$, then $\delta>0$ and $\bx^\star=(\bA + \delta \bI)^{-1} \bA \bb$, where $\delta^\star$ is selected so that $ \left(\bx^\star\right)^{\mbox{\tiny H}}\bx^\star=P_a$. 

This shows that depending on $P_a$, the optimal precoding can be either the ZF precoder or the regularized zero-forcing precoder. The solution of \eqref{eq:prob} can be thought of as a clipping algorithm, returning the ZF precoding if this latter exhibits a power less than $P_a$ or the RZF precoding otherwise. When the RZF is used, a scalar $\delta^\star$ must be selected so that the total transmit power is equal to $P_a$. The work in \cite{Sadaghat2014a} suggests to determine
$\delta^\star$ numerically, which does not give insights on the optimal choice of $\gamma$. To overcome this issue, we resort to tools from random matrix theory and assume that $M$ and $K$ increase with the same pace. This allows us to compute a deterministic equivalent approximating the SINR. Maximizing the obtained expression, an optimal value for $\gamma$ can be thus derived.  More details in this respect will be provided in the next section.  


\section{Asymptotic analysis}

\subsection{Deterministic equivalent of $\delta^\star$}
In this section, we will only treat the case in which ${\bf b}^{H}{\bf b}>P_a$ since, otherwise, the lagrangian multiplier $\delta$ is known to be equal to zero. If   ${\bf b}^{H}{\bf b}>P_a$, the transmit power can be expressed as:
\begin{equation}
P( \delta)=\bx^{\mbox{\tiny H}}\bx=\gamma \bu^H (\bH^{\mbox{\tiny H}} \bH+\delta \bI)^{-1} \bH^{\mbox{\tiny H}} \bH (\bH^{\mbox{\tiny H}} \bH+  \delta \bI)^{-1} \bu,
\label{P}
\end{equation}
In the first step, we compute a deterministic equivalent of the transmit power for a fixed $\delta>0$. For notational convenience, we define $\rho=\frac{\delta}{K}$.
\begin{prop}
    Let $\alpha(t)$ be the unique solution to the following fixed-point equation:
\begin{equation}
\alpha(t)=\frac{1}{K}\tr\left(\bR\left(\bI_M+ \frac{t\bR}{1+t\alpha(t)}\right)^{-1} \right).
\end{equation}

In the asymptotic regime, we have 
$$
 P (\rho)-\overline P(\rho)\xrightarrow[K\to \infty]{a.s.}0,
$$
 where
\begin{equation}
\overline P(\rho)=\frac{\gamma}{\rho^2} \beta(\frac{1}{\rho}),
\end{equation}
and
\begin{equation}\beta(t) =\frac{\frac{1}{K}\tr\left(\bR \bT^2(t)\right)}{(1+t\alpha(t))^2-\frac{t^2}{K}\tr\left(\bR \bT(t)\bR \bT(t)\right)} .\end{equation}
with $\bT(t)=\left(\bI_M+ \frac{t\bR}{1+t\alpha(t)}\right)^{-1}$.
\end{prop}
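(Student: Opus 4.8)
The plan is to peel off the data vector $\bu$ with a trace lemma, rewrite the resulting deterministic trace as the $\rho$-derivative of a single normalized resolvent trace, invoke the first-order deterministic equivalent $\bT(t)$ encoded in the stated fixed point, and finally differentiate that fixed point to produce the denominator of $\beta$. Write $\bQ=(\bHh\bH+\delta\bI)^{-1}$ with $\delta=K\rho$, so that $P(\rho)=\gamma\,\bu^{\mbox{\tiny H}}\bQ\bHh\bH\bQ\,\bu$. Since $\bu\sim\mathcal{CN}({\bf 0},\bI)$ is independent of $\bH$, and since $\|\bQ\bHh\bH\bQ\|=O(1/K)$ (because $\|\bQ\|\le 1/\delta$ while $\|\bHh\bH\|=O(M)$), the trace lemma for quadratic forms in a vector with i.i.d.\ entries gives $\bu^{\mbox{\tiny H}}\bQ\bHh\bH\bQ\,\bu-\tr(\bQ\bHh\bH\bQ)\asto0$; the variance of the quadratic form is of order $\tr[(\bQ\bHh\bH\bQ)^2]=O(1/K)$, so a Borel--Cantelli argument yields the almost-sure statement. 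It then remains to find a deterministic equivalent of the $O(1)$ quantity $\tr(\bQ\bHh\bH\bQ)$ to $o(1)$ accuracy.

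The key algebraic observation is that, up to the mismatch between the $K$ and $M$ null eigenvalues, $\tr(\bQ\bHh\bH\bQ)$ is a derivative of a single resolvent trace. Introducing the normalized $M\times M$ resolvent $\bQ_M(\rho):=(\tfrac1K\bH\bHh+\rho\bI_M)^{-1}$, whose eigenvalues are $O(1)$, a direct eigenvalue computation yields the exact identity
$$\tr(\bQ\bHh\bH\bQ)=\partial_\rho\!\left(\rho\,\tfrac1K\tr\bQ_M(\rho)\right),$$
the $M-K$ null eigenvalues contributing only a $\rho$-independent constant that is annihilated by $\partial_\rho$. Now $\tfrac1K\tr\bQ_M(\rho)$ is a normalized trace of a bounded matrix, and with $t=1/\rho$ one has $\bQ_M(\rho)=t\,(\bI+\tfrac tK\bH\bHh)^{-1}$, which is exactly the sample-covariance resolvent $\bI+\tfrac tK\bR^{1/2}\big(\sum_k\bz_k\bz_k^{\mbox{\tiny H}}\big)\bR^{1/2}$ whose deterministic equivalent is $\bT(t)$, with $\alpha(t)=\tfrac1K\tr(\bR\bT(t))$ the stated fixed point. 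Hence $\tfrac1K\tr\bQ_M(\rho)-\tfrac tK\tr\bT(t)\asto0$, and therefore $\rho\,\tfrac1K\tr\bQ_M(\rho)\asto\tfrac1K\tr\bT(1/\rho)$.

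Granting for a moment that $\partial_\rho$ may be passed through the limit, we get $\overline P(\rho)=\gamma\,\partial_\rho\big(\tfrac1K\tr\bT(1/\rho)\big)=-\gamma t^2\,\tfrac1K\partial_t\tr\bT(t)$, so it suffices to show $-\tfrac1K\partial_t\tr\bT(t)=\beta(t)$. Writing $u(t)=t/(1+t\alpha(t))$ so that $\bT=(\bI+u\bR)^{-1}$, the chain rule gives $\partial_t\bT=-u'\bT\bR\bT$, whence $-\tfrac1K\partial_t\tr\bT=u'\,\tfrac1K\tr(\bR\bT^2)$ and $\alpha'=-\tfrac{u'}{K}\tr(\bR\bT\bR\bT)$. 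Substituting the latter into the elementary relation $u'=(1-t^2\alpha')/(1+t\alpha)^2$ produces a single linear equation for $u'$ whose solution is $u'=\big[(1+t\alpha)^2-\tfrac{t^2}{K}\tr(\bR\bT\bR\bT)\big]^{-1}$. This is precisely the reciprocal of the denominator of $\beta$, so $-\tfrac1K\partial_t\tr\bT=\beta(t)$ and $\overline P(\rho)=\tfrac{\gamma}{\rho^2}\beta(1/\rho)$, as claimed.

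The main obstacle is the legitimacy of the differentiation, and here the normalization is essential: the deterministic equivalent controls $\tfrac1K\tr\bQ_M(\rho)-\tfrac tK\tr\bT(t)$ only to $o(1)$, i.e.\ the unnormalized difference only to $o(M)$, which by itself says nothing about derivatives. The remedy is to observe that both $\rho\mapsto\tfrac1K\tr\bQ_M(\rho)$ and its deterministic equivalent extend to functions analytic on $\{\rho\in\mathbb{C}:\mathrm{Re}\,\rho>0\}$ and uniformly bounded on compact subsets, so the almost-sure convergence is locally uniform and Vitali's (Montel's) theorem transfers it to the derivatives, giving $\partial_\rho\big(\rho\,\tfrac1K\tr\bQ_M\big)-\partial_\rho\big(\tfrac1K\tr\bT(1/\rho)\big)\asto0$. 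An alternative that sidesteps differentiation is to compute the deterministic equivalent of the bilinear form $\tfrac1K\tr\big(\bR\,\bQ_M(\rho)\,\bR\,\bQ_M(\rho)\big)$ directly from the Bai--Silverstein lemma for products of two resolvents, which reproduces the same denominator at the cost of a longer and more delicate computation.
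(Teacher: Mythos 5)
Your proof is correct, but it takes a genuinely different route from the paper, whose entire proof is a one-line citation: the proposition is obtained as a ``direct application of [Theorem 8, Kammoun2014a]'', a result that directly supplies the deterministic equivalent of the two-resolvent quadratic form $\bu^{\mbox{\tiny H}}(\bHh\bH+\delta\bI)^{-1}\bHh\bH(\bHh\bH+\delta\bI)^{-1}\bu$. You instead rebuild this second-order equivalent from first-order ingredients: the quadratic-form trace lemma to replace $\bu$ by a trace; the exact spectral identity $\tr(\bQ\bHh\bH\bQ)=\partial_\rho\bigl(\rho\,\tfrac1K\tr\bQ_M(\rho)\bigr)$ with $\bQ_M(\rho)=(\tfrac1K\bH\bHh+\rho\bI_M)^{-1}$, the $M-K$ null eigenvalues contributing only a constant killed by $\partial_\rho$; the first-order equivalent $\tfrac1K\tr\bQ_M(\rho)\approx\tfrac tK\tr\bT(t)$, which is the same [Theorem 1, Kammoun2014a] the paper itself invokes in Appendix A; and a Vitali/Montel argument to pass almost-sure convergence to derivatives, which is the step a direct second-order theorem lets the paper skip. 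I checked your closing algebra: with $u=t/(1+t\alpha)$, substituting $\alpha'=-\tfrac{u'}{K}\tr(\bR\bT\bR\bT)$ into $u'=(1-t^2\alpha')/(1+t\alpha)^2$ does give $u'=\bigl[(1+t\alpha)^2-\tfrac{t^2}{K}\tr(\bR\bT\bR\bT)\bigr]^{-1}$, hence $-\tfrac1K\partial_t\tr\bT(t)=\beta(t)$ exactly as stated. It is worth noting that your derivative identity is precisely the matrix-$\bR$ generalization of the paper's own Appendix C computation for $\bR=\bI_M$, where $\overline P(\rho)=\gamma[m(\rho)+\rho m'(\rho)]=\gamma\,\partial_\rho(\rho\,m(\rho))$; so your proof additionally explains \emph{why} $\beta$ has the stated form (it is the negative $t$-derivative of $\tfrac1K\tr\bT$), something the citation leaves opaque, at the cost of the analyticity/derivative-interchange machinery.

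Two small repairs. First, Chebyshev with variance $O(1/K)$ is not summable, so Borel--Cantelli does not follow from the second moment as you assert; use the standard trace-lemma moment bound with $p=4$, i.e.\ $\mathbb{E}\bigl|\bu^{\mbox{\tiny H}}\bB\bu-\tr\bB\bigr|^{4}=O\bigl((\tr\bB\bB^{\mbox{\tiny H}})^{2}\bigr)=O(K^{-2})$ for $\bB=\bQ\bHh\bH\bQ$, which is summable. Second, in your final remark the two-resolvent object relevant to $P(\rho)$ is $\tfrac1K\tr\bigl(\bQ_M\tfrac1K\bH\bHh\bQ_M\bigr)=\tfrac1K\tr\bQ_M-\tfrac{\rho}{K}\tr\bQ_M^{2}$, not $\tfrac1K\tr(\bR\bQ_M\bR\bQ_M)$; the latter's equivalent does produce the same denominator, but it is not the bilinear form you need.
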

\begin{proof}
The result is obtained by direct application of the result of \cite[Theorem 8]{Kammoun2014a}.
\end{proof}

For fixed $\gamma$ and in the case $\bb^H\bb \ge P_a $, as per the design procedure described in the previous section, the instantaneous optimal $\rho$ is selected so that: 
$$
P(\rho^\star)=P_a.
$$
It can be shown that $\overline P(\rho)$ is a strictly decreasing function. As a consequence, a deterministic equivalent for $\rho^\star$, $\overline{\rho}$, is solution to the following equation:
 \begin{equation}
 \overline P(\rho)=P_a,
\label{delta_gamma1}
 \end{equation}
 It is worth pointing out that the use of $\overline{\rho}$ instead of $\rho^\star$ might be not a good idea in practice, since it might for some channel realizations lead to an instantaneous power that is higher than $P_a$. This value can be, however, leveraged in order to determine the optimal value of $\gamma$ that maximizes the power efficiency and the SINR. 

\subsection{Optimal $\gamma$}
In this section, we consider the SINR of all users defined as\cite{Sadaghat2014a}:
 \begin{equation}
 {\rm SINR} =\frac{\gamma \bu^{\mbox{\tiny H}}\bu}{\| \bH^{\mbox{\tiny H}} \bx - \sqrt{\gamma}\bu\|^2+K\sigma^2}.
 \label{SINR}
 \end{equation}

A deterministic equivalent of the SINR is given in the following proposition.
\begin{prop}
In the asymptotic regime, we have 
$$
{\rm SINR}-\overline{\rm SINR}\xrightarrow[K\to \infty]{a.s.}0,
$$
where
$$
\overline{\rm SINR}=\frac{1}{\frac{\sigma^2}{P_a\rho^2}\beta(\frac{1}{\rho})+\left(\frac{1}{K}\tr \left(\bT(\frac{1}{\rho})\right)-\frac{M-K}{K}-\frac{1}{\rho} \beta(\frac{1}{\rho})\right)},
$$
\end{prop}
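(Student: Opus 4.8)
The plan is to start from the explicit Case~2 form of the precoder and reduce the SINR to a ratio of two quadratic forms in the data vector $\bu$ that can be handled by standard random matrix tools. Writing $\bG=\bH^{\mbox{\tiny H}}\bH$ and recalling from the previous section that $\bA\bb=\sqrt{\gamma}\,\bH\bu$, the optimal precoder reads $\bx^\star=\sqrt{\gamma}\,(\bH\bH^{\mbox{\tiny H}}+\delta\bI)^{-1}\bH\bu$. Using the push-through identity $\bH^{\mbox{\tiny H}}(\bH\bH^{\mbox{\tiny H}}+\delta\bI)^{-1}=(\bG+\delta\bI)^{-1}\bH^{\mbox{\tiny H}}$ together with $(\bG+\delta\bI)^{-1}\bG-\bI=-\delta(\bG+\delta\bI)^{-1}$, the residual collapses to
$$\bH^{\mbox{\tiny H}}\bx^\star-\sqrt{\gamma}\,\bu=-\sqrt{\gamma}\,\delta\,(\bG+\delta\bI)^{-1}\bu ,$$
so that the distortion term becomes $\|\bH^{\mbox{\tiny H}}\bx^\star-\sqrt{\gamma}\,\bu\|^2=\gamma\,\delta^2\,\bu^{\mbox{\tiny H}}(\bG+\delta\bI)^{-2}\bu$, and the SINR is a ratio of quadratic forms in $\bu$ with numerator $\gamma\,\bu^{\mbox{\tiny H}}\bu$.

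Since $\bu\sim\mathcal{CN}(\mathbf{0},\bI)$ is independent of $\bH$, I would apply the trace lemma to both forms after normalising by $K$: $\tfrac1K\bu^{\mbox{\tiny H}}\bu\to1$ and $\tfrac1K(\bu^{\mbox{\tiny H}}\tilde{\bQ}\bu-\tr\tilde{\bQ})\to0$ almost surely, where $\tilde{\bQ}=\delta^2(\bG+\delta\bI)^{-2}$ has bounded spectral norm. It then remains to find a deterministic equivalent for $\delta^2\tr[(\bG+\delta\bI)^{-2}]$. The key algebraic step is the identity
$$\delta^2(\bG+\delta\bI)^{-2}=\delta(\bG+\delta\bI)^{-1}-\delta\,\bG(\bG+\delta\bI)^{-2},$$
obtained by right-multiplying $\delta(\bG+\delta\bI)^{-1}=\bI-\bG(\bG+\delta\bI)^{-1}$ by $\delta(\bG+\delta\bI)^{-1}$. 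Taking traces splits the quantity into two pieces I can already evaluate. The second trace $\tr[\bG(\bG+\delta\bI)^{-2}]$ is precisely the one governing the transmit power, so Proposition~1 gives $\tr[\bG(\bG+\delta\bI)^{-2}]\leftrightarrow\tfrac1{\rho^2}\beta(\tfrac1\rho)$ (recall $\delta=K\rho$ and $\overline P=\tfrac{\gamma}{\rho^2}\beta$), hence $\delta\,\tr[\bG(\bG+\delta\bI)^{-2}]\leftrightarrow\tfrac{K}{\rho}\beta(\tfrac1\rho)$. The first trace is handled by passing from $\bG=\bH^{\mbox{\tiny H}}\bH$ to $\bH\bH^{\mbox{\tiny H}}$ via the rank relation $\tr[(\bG+\delta\bI)^{-1}]=\tr[(\bH\bH^{\mbox{\tiny H}}+\delta\bI)^{-1}]-(M-K)\delta^{-1}$ and the standard sample-covariance deterministic equivalent $(\bH\bH^{\mbox{\tiny H}}+\delta\bI)^{-1}\leftrightarrow\delta^{-1}\bT(\tfrac1\rho)$, which yields $\delta\,\tr[(\bG+\delta\bI)^{-1}]\leftrightarrow\tr\bT(\tfrac1\rho)-(M-K)$. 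Collecting the two pieces gives the distortion's deterministic equivalent as $\gamma\big[\tr\bT(\tfrac1\rho)-(M-K)-\tfrac{K}{\rho}\beta(\tfrac1\rho)\big]$; dividing numerator and denominator of the SINR by $\gamma K$ and substituting $\sigma^2/\gamma=\tfrac{\sigma^2}{P_a\rho^2}\beta(\tfrac1\rho)$ (from $\overline P(\rho)=P_a$, i.e.\ $\gamma=P_a\rho^2/\beta(\tfrac1\rho)$) reproduces the claimed $\overline{\rm SINR}$.

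The main obstacle is not any single trace computation but the fact that \eqref{SINR} is evaluated at the \emph{data-dependent} multiplier $\delta^\star$ (equivalently $\rho^\star$) chosen so that $P(\delta^\star)=P_a$, whereas $\overline{\rm SINR}$ is written at the deterministic $\overline\rho$ solving $\overline P(\rho)=P_a$. The delicate part is therefore to justify substituting $\overline\rho$ for $\rho^\star$: one must establish $\rho^\star\to\overline\rho$ almost surely (using the strict monotonicity of $\overline P$ together with the convergence $P(\rho)-\overline P(\rho)\to0$ of Proposition~1) and show that the convergence of the quadratic forms is \emph{uniform} in $\rho$ over a neighbourhood of $\overline\rho$, so that the composition passes to the limit. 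This uniformity, together with verifying that the resolvents stay uniformly bounded away from singular as $K\to\infty$ so that the trace lemma and the deterministic-equivalent bounds apply with controlled rates, is where the real work lies; the remaining steps are the elementary matrix identities above and direct invocations of Proposition~1 and the standard deterministic equivalent for $\bT$.
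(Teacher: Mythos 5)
Your proposal is correct and follows essentially the same route as the paper's Appendix A: the same residual identity reducing the distortion to $\gamma\delta^2\bu^{\mbox{\tiny H}}(\bG+\delta\bI)^{-2}\bu$, the same resolvent splitting (the paper's second line is exactly your identity $\delta^2(\bG+\delta\bI)^{-2}=\delta(\bG+\delta\bI)^{-1}-\delta\bG(\bG+\delta\bI)^{-2}$ written at the quadratic-form level, giving $\rho\bigl(\frac{\gamma}{K}\bu^{\mbox{\tiny H}}(\frac{\bH^{\mbox{\tiny H}}\bH}{K}+\rho\bI)^{-1}\bu-P(\rho)\bigr)$), the same rank relation between $\bH^{\mbox{\tiny H}}\bH$ and $\bH\bH^{\mbox{\tiny H}}$, and the same invocations of Proposition~1 and of the deterministic equivalent for $\bT$ from \cite{Kammoun2014a}. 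The only divergence is that you explicitly flag the substitution of the random $\rho^\star$ by the deterministic $\overline{\rho}$ (requiring $\rho^\star\to\overline{\rho}$ a.s.\ and uniformity of the convergence in a neighbourhood of $\overline{\rho}$) as outstanding work, a step the paper's proof passes over in silence.
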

\begin{proof}
See appendix \ref{appA}.
\end{proof}
With the asymptotic equivalent of the SINR and the transmit power at hand, we are now in position to determine the optimal $\gamma$:
\begin{prop}
The optimal $\gamma$ that maximizes the SINR in the asymptotic regime is: $$\gamma^\star=\frac{1}{\frac{P_a}{\sigma^4}\beta\left(\frac{P_a}{\sigma^2}\right)}.$$
\end{prop}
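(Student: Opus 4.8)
The plan is to exploit the fact that the deterministic SINR of Proposition~2 depends on the design parameter $\gamma$ only implicitly, through the quantity $\overline\rho$ that is pinned down by the power constraint \eqref{delta_gamma1}. Since $\overline P(\rho)=\frac{\gamma}{\rho^2}\beta(\frac1\rho)$ is strictly decreasing in $\rho$, the map $\gamma\mapsto\overline\rho$ is a bijection whose inverse, obtained from $\overline P(\rho)=P_a$, is $\gamma=\frac{P_a\rho^2}{\beta(1/\rho)}$. Consequently, maximizing $\overline{\rm SINR}$ over $\gamma$ is the same as minimizing its denominator
$$
f(\rho)=\frac{\sigma^2}{P_a\rho^2}\beta\Big(\tfrac1\rho\Big)+\frac1K\tr\big(\bT(\tfrac1\rho)\big)-\frac{M-K}{K}-\frac1\rho\beta\Big(\tfrac1\rho\Big)
$$
over $\rho>0$ and then reading off $\gamma^\star$ from the constraint. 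First I would change variables to $t=1/\rho$ and minimize $g(t)=\frac{\sigma^2}{P_a}t^2\beta(t)-t\beta(t)+\frac1K\tr(\bT(t))-\frac{M-K}{K}$.

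The crux of the argument will be a single derivative identity, namely
$$
\frac{d}{dt}\Big(\frac1K\tr(\bT(t))\Big)=-\beta(t).
$$
To establish it I would differentiate $\bT(t)=(\bI_M+\frac{t\bR}{1+t\alpha(t)})^{-1}$, obtaining $\bT'(t)=-\bT\bR\bT\,\frac{1-t^2\alpha'(t)}{(1+t\alpha(t))^2}$, where I use that $\bR$ and $\bT(t)$ commute. Feeding this into $\alpha(t)=\frac1K\tr(\bR\bT(t))$ gives a linear equation for $\alpha'(t)$ whose solution satisfies $1-t^2\alpha'(t)=\frac{(1+t\alpha)^2}{(1+t\alpha)^2-\frac{t^2}{K}\tr(\bR\bT\bR\bT)}$. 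Substituting this factor back into $\frac1K\tr(\bT'(t))=-\frac{1-t^2\alpha'(t)}{(1+t\alpha)^2}\frac1K\tr(\bR\bT^2)$ makes the numerator and denominator of $\beta(t)$ reappear, collapsing the whole expression to $-\beta(t)$.

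With this identity the optimization is immediate. Writing $\beta'=\frac{d\beta}{dt}$, the derivative is
$$
g'(t)=\frac{\sigma^2}{P_a}\big(2t\beta+t^2\beta'\big)-\big(2\beta+t\beta'\big),
$$
the extra $-\beta$ coming precisely from $\frac{d}{dt}\frac1K\tr(\bT(t))$. Choosing $t^\star=P_a/\sigma^2$ makes $\frac{\sigma^2}{P_a}=1/t^\star$, so the first bracket equals $\frac1{t^\star}\big(2t^\star\beta+(t^\star)^2\beta'\big)=2\beta+t^\star\beta'$ and cancels the second, giving $g'(t^\star)=0$ without ever needing a closed form for $\beta'$. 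Hence $\rho^\star=\sigma^2/P_a$, and plugging into $\gamma=\frac{P_a(\rho^\star)^2}{\beta(1/\rho^\star)}$ yields $\gamma^\star=\frac{\sigma^4/P_a}{\beta(P_a/\sigma^2)}=\frac{1}{\frac{P_a}{\sigma^4}\beta(P_a/\sigma^2)}$, as claimed.

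I expect the derivative identity $\frac{d}{dt}\frac1K\tr(\bT(t))=-\beta(t)$ to be the main obstacle: it is what converts an apparently intractable optimization over implicitly defined random-matrix functionals into a one-line cancellation, and proving it cleanly hinges on correctly carrying out the implicit differentiation of the fixed-point equation for $\alpha(t)$ and on the commutativity of $\bR$ and $\bT$. The only remaining point would be to confirm that $t^\star$ is a minimizer of $g$ (equivalently a maximizer of $\overline{\rm SINR}$) rather than a saddle, which I would settle either by a sign check of $g''(t^\star)$ or by arguing that $g$ has a unique stationary point on $(0,\infty)$ with the appropriate behavior at the boundaries.
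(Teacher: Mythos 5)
Your proposal is correct and follows essentially the same route as the paper's Appendix~B: reduce the problem to minimizing the denominator of $\overline{\rm SINR}$ over $\rho$ with $\gamma$ recovered from the power constraint $\overline P(\rho)=P_a$, invoke the key identity $\frac{1}{K}\frac{\partial}{\partial t}\tr\left(\bT(t)\right)=-\beta(t)$, and verify the stationarity cancellation at $\rho^\star=\sigma^2/P_a$ (your change of variables $t=1/\rho$ is cosmetic). If anything, you supply more detail than the paper, which merely asserts the trace-derivative identity that you correctly derive via implicit differentiation of the fixed point for $\alpha(t)$, and which settles uniqueness of the stationary point only by appeal to Fig.~1, a gap you rightly flag.
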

\begin{proof}
See appendix \ref{appB}.
\end{proof}

\subsection{Particular case: $\bR=\bI_M$}
The treatment of $\bR=\bI_M$ is detailed hereafter. Interestingly, it turns out that in this case, the transmit power and the SINR admit explicit closed-form expressions that do not required solving fixed-point equations. 
\begin{equation}
P( \delta)=\bx^{\mbox{\tiny H}}\bx=\gamma \bu^H (\bH^{\mbox{\tiny H}} \bH+\delta \bI)^{-1} \bH^{\mbox{\tiny H}} \bH (\bH^{\mbox{\tiny H}} \bH+  \delta \bI)^{-1} \bu,
\label{P}
\end{equation}
\begin{prop}
In the asymptotic regime, we have 
$$
 P (\rho)-\overline P (\rho)\xrightarrow[K\to \infty]{a.s.}0,
$$
 where
$$
\overline P (\rho)=\gamma [m(\rho)+ \rho m'(\rho)]\overset{\Delta}{=}\gamma \bar f(\rho),
$$
and $m(\rho) =\frac{-2}{1-c-\rho-\sqrt{(1-c+\rho)^2+4c\rho}} $.
\end{prop}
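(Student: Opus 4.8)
The almost-sure convergence itself requires essentially no new work: Proposition~1 is stated for an arbitrary correlation matrix $\bR$, so applying it with $\bR=\bI_M$ already yields $P(\rho)-\overline P(\rho)\to 0$ almost surely. The only genuine content of this particular case is therefore a deterministic identity, namely that the general deterministic equivalent $\frac{\gamma}{\rho^2}\beta(\frac{1}{\rho})$ collapses, when $\bR=\bI_M$, to the closed form $\gamma[m(\rho)+\rho m'(\rho)]$. I would establish this in a way that also explains where the combination $m+\rho m'$ comes from, rather than merely simplifying $\beta$ blindly.

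The plan is to start from the power kernel and peel off its resolvent structure. Setting $\bQ=(\bHh\bH+\delta\bI)^{-1}$, the matrix sandwiched in the quadratic form defining $P(\rho)$ is $\bQ\,\bHh\bH\,\bQ=\bQ-\delta\bQ^2$, using $\bHh\bH=(\bHh\bH+\delta\bI)-\delta\bI$. Since $\bu\sim\mathcal{CN}(\mathbf 0,\bI)$ is independent of $\bH$, the trace lemma gives $\bu^{\mbox{\tiny H}}(\bQ-\delta\bQ^2)\bu-\tr(\bQ-\delta\bQ^2)\to 0$ almost surely, the difference vanishing even without the customary $1/K$ normalization because its variance, of order $\tr\left[(\bQ-\delta\bQ^2)^2\right]$, is $O(1/K)$ (the eigenvalues of $\bHh\bH$ and $\delta$ are both of order $K$). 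Thus $P(\rho)/\gamma$ is asymptotically $\tr\bQ-\delta\tr\bQ^2$. Writing $g(\delta)=\tr\bQ$ and using $\tr\bQ^2=-g'(\delta)$ turns this into $g(\delta)+\delta g'(\delta)$, and passing to the variable $\rho=\delta/K$ rescales it into exactly $m(\rho)+\rho m'(\rho)$ with $m(\rho):=\lim_K g(K\rho)$. This is the structural reason for the announced form.

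It remains to identify $m$ in closed form. Because $g(\delta)=\frac{1}{K}\tr\left(\frac{1}{K}\bHh\bH+\rho\bI\right)^{-1}$, the limit $m(\rho)$ is the Stieltjes transform, evaluated at $-\rho$, of the limiting spectral distribution of $\frac{1}{K}\bHh\bH$, which for $\bR=\bI_M$ is a (rescaled) Marchenko--Pastur law with ratio $c$. Writing down the associated self-consistent quadratic equation for $m$ and selecting the branch for which $m(\rho)>0$ when $\rho>0$ (equivalently $m(\rho)\sim 1/\rho$ as $\rho\to\infty$) yields the stated expression. The same $m$ is obtained by specializing Proposition~1: for $\bR=\bI_M$ the matrix $\bT(t)$ reduces to a scalar multiple of $\bI_M$, the fixed-point equation for $\alpha(t)$ degenerates to a scalar quadratic, and substitution into $\beta$ reproduces $m(\rho)+\rho m'(\rho)$. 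I would use this specialization as an independent cross-check of the algebra.

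The delicate points are two. First is convergence of the derivative term: the almost-sure convergence $g(\delta)\to m(\rho)$ is only pointwise, and one cannot differentiate a pointwise limit for free to obtain $\delta g'(\delta)\to\rho m'(\rho)$. I would resolve this by noting that $g$ is analytic in $\delta$ off the negative real axis and locally uniformly bounded there, so a Vitali/normal-families argument upgrades the pointwise convergence to locally uniform convergence, which then transfers to the derivatives. Second, and this is where the real bookkeeping lives, one must fix the correct square-root branch in the Marchenko--Pastur quadratic and keep the scaling relating $\delta$, $\rho$, $M$ and $K$ perfectly consistent; this is the step most prone to error, and the one I would verify numerically at a couple of values of $(c,\rho)$ against the direct evaluation $g(\delta)+\delta g'(\delta)$ before trusting the final closed form.
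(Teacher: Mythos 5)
Your proposal is correct and takes essentially the same route as the paper's Appendix C: the identical resolvent decomposition $\bQ(\rho)\frac{\bHh\bH}{K}\bQ(\rho)=\bQ(\rho)-\rho\,\bQ(\rho)^2$, the same reduction of the quadratic form $\frac{1}{K}\bu^{\mbox{\tiny H}}\bQ(\rho)\bu$ to the Marchenko--Pastur Stieltjes transform $m(\rho)$, and the same differentiation trick for the $\bQ^2$ term, arriving at $\gamma[m(\rho)+\rho m'(\rho)]$. Your additions --- justifying the limit/derivative interchange by a Vitali/normal-families argument, which the paper uses implicitly in \eqref{conver_deriv}, and noting that almost-sure convergence of the quadratic form requires fourth-moment bounds rather than the $O(1/K)$ variance alone (which only gives convergence in probability) --- are rigor patches on the same method, not a different proof.
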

\begin{proof}
See Appendix \ref{appC}.
\end{proof}
  
\begin{prop}
\label{prop2}
In the asymptotic regime, we have
$$
{\rm SINR}-\overline{\rm SINR}\xrightarrow[K\to \infty]{a.s.}0,
$$
where
 \begin{equation}
\overline{\rm SINR}=\frac{1}{ -\bar \rho^2 m'(\bar\rho)+\frac{\sigma^2}{P_a}\bar f(\bar\rho)}.
\label{delta_gamma}
 \end{equation}
\end{prop}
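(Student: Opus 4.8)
The plan is to bypass the general correlated-$\bR$ machinery and compute the SINR denominator directly, exploiting that under $\bR=\bI_M$ every matrix in sight is a function of $\bHh\bH$, so that the push-through identity collapses the distortion term to a single resolvent quadratic form. First I would evaluate $\|\bHh\bx^{\star}-\sqrt{\gamma}\bu\|^2=\phi(\bx^{\star})$ at the Case~2 minimizer $\bx^{\star}=(\bA+\delta\bI)^{-1}\bA\bb$, with $\bA=\bH\bHh$ and $\bb=\sqrt{\gamma}\,\bH(\bHh\bH)^{-1}\bu$. Writing $\bx^{\star}-\bb=-\delta(\bA+\delta\bI)^{-1}\bb$ gives $\phi(\bx^{\star})=\delta^{2}\bb^{\mbox{\tiny H}}(\bA+\delta\bI)^{-1}\bA(\bA+\delta\bI)^{-1}\bb$, and repeated use of $\bHh(\bH\bHh+\delta\bI)^{-1}=(\bHh\bH+\delta\bI)^{-1}\bHh$ reduces this, once the $\bHh\bH$ factors coming from $\bb$ cancel, to the compact form $\phi(\bx^{\star})=\gamma\delta^{2}\,\bu^{\mbox{\tiny H}}(\bHh\bH+\delta\bI)^{-2}\bu$.

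Next I would apply the trace lemma. Since $\bu\sim\mathcal{CN}(\mathbf{0},\bI)$ is independent of $\bH$, both $\bu^{\mbox{\tiny H}}\bu-K$ and $\bu^{\mbox{\tiny H}}(\bHh\bH+\delta\bI)^{-2}\bu-\tr(\bHh\bH+\delta\bI)^{-2}$ vanish almost surely. With $\delta=K\rho$ the key quantity is $\tr(\bHh\bH+K\rho\bI)^{-2}=-\tfrac1K\tfrac{d}{d\rho}\tr(\bHh\bH+K\rho\bI)^{-1}$, and the proof of the transmit-power proposition (Appendix~\ref{appC}) already shows that $\rho\mapsto\tr(\bHh\bH+K\rho\bI)^{-1}$ converges almost surely to $m(\rho)$. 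Hence $\phi(\bx^{\star})=\gamma K^{2}\rho^{2}\tr(\bHh\bH+K\rho\bI)^{-2}$ has deterministic equivalent $-\gamma K\rho^{2}m'(\rho)$, the numerator $\gamma\bu^{\mbox{\tiny H}}\bu$ equals $\gamma K$, and the denominator of \eqref{SINR} behaves as $K\bigl(\sigma^{2}-\gamma\rho^{2}m'(\rho)\bigr)$. The factors of $K$ cancel and $\mathrm{SINR}\to\bigl(\sigma^{2}/\gamma-\rho^{2}m'(\rho)\bigr)^{-1}$.

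Finally I would eliminate $\gamma$ through the power-matching condition $\overline P(\bar\rho)=P_{a}$ established earlier: since $\overline P(\rho)=\gamma\bar f(\rho)$, this gives $\sigma^{2}/\gamma=(\sigma^{2}/P_{a})\bar f(\bar\rho)$ and delivers \eqref{delta_gamma}. As a consistency check against the general-$\bR$ proposition (Appendix~\ref{appB}), the first denominator term matches because $\tfrac1{\rho^{2}}\beta(\tfrac1\rho)=\bar f(\rho)$ forces $\tfrac{\sigma^{2}}{P_{a}\rho^{2}}\beta(\tfrac1\rho)=\sigma^{2}/\gamma$, while the bracketed term matches once one records the scalar reductions $\tfrac1K\tr\bT(\tfrac1\rho)=\alpha(\tfrac1\rho)=\tfrac1c-1+\rho\,m(\rho)$ and $\tfrac1\rho\beta(\tfrac1\rho)=\rho m(\rho)+\rho^{2}m'(\rho)$, so the bracket $\tfrac1K\tr\bT(\tfrac1\rho)-\tfrac{M-K}{K}-\tfrac1\rho\beta(\tfrac1\rho)$ collapses to $-\rho^{2}m'(\rho)$.

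The main obstacle is analytic rather than algebraic. Passing from the almost-sure convergence of $\tr(\bHh\bH+K\rho\bI)^{-1}$ to a statement about its derivative requires justifying convergence of derivatives; since both sides are Stieltjes transforms and hence analytic in $\rho$ off the negative real axis, I would invoke a Vitali/normal-families argument to upgrade pointwise almost-sure convergence to convergence of $m'$. The second delicate point is that the $\rho$ entering the SINR is the instantaneous $\rho^{\star}$ solving $P(\rho^{\star})=P_{a}$, which is random, whereas the limit is expressed through the deterministic $\bar\rho$. I would therefore first establish $\rho^{\star}\xrightarrow[K\to \infty]{a.s.}\bar\rho$, using the monotonicity of $\overline P$ and the almost-sure convergence $P(\rho)-\overline P(\rho)\to0$, and then transfer the evaluation point by a continuity and local uniform-convergence argument in a neighborhood of $\bar\rho$, exactly as in the proof of the general SINR proposition.
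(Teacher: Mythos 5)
Your proposal is correct and follows essentially the same route as the paper's Appendix D: you reduce the distortion at the Case~2 minimizer to $\frac{\gamma\rho^2}{K}\bu^{\mbox{\tiny H}}\left(\frac{\bHh\bH}{K}+\rho\bI\right)^{-2}\bu$, invoke the trace lemma together with the derivative convergence established in Appendix C to obtain the deterministic equivalent $-\gamma\rho^2 m'(\rho)$, and then eliminate $\gamma$ through the power-matching relation $\gamma\bar f(\bar\rho)=P_a$, exactly as the paper does. Your two added rigor points (the Vitali argument justifying convergence of derivatives of the Stieltjes transform, and the transfer from the random $\rho^\star$ to the deterministic $\bar\rho$ via monotonicity of $\overline P$) address steps the paper leaves implicit, and your consistency check against the correlated-$\bR$ proposition is accurate.
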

\begin{proof}
See Appendix \ref{appD}.
\end{proof}
From equation \eqref{delta_gamma}, the optimal $\gamma$ that maximizes the SINR can be determined by optimizing over $\bar \rho$.
Note that function $g(\bar\rho)=-\bar \rho^2 m'(\bar \rho)+\frac{\sigma^2}{P_a}\bar f(\bar\rho)$ is convex over $[0,+\infty[$. This can be easily shown by taking the second derivative. Besides, it can be easily checked that $g'(\frac{\sigma^2}{P_a})=0$. Hence, $\bar \rho$ that maximizes the SINR is $\bar\rho^\star=\frac{\sigma^2}{P_a}$ and the optimal $\gamma$ is such as $\gamma^\star=\frac{Pa}{\bar f\left(\frac{\sigma^2}{P_a}\right)}$.

\section{Power Efficiency Analysis}
Assuming that the maximum allowed transmit power is $P_a$, the average power efficiency can be expressed as
$$
\eta_t=\eta_a\frac{\mathbb{E}\left[({\bx^*})^{\mbox{\tiny H}} \bx^*\right]}{P_a},
$$
where $\eta_a$ is the efficiency of the power amplifier \cite{Sadaghat2014a}.
As per the design procedure described in section II, quantity  $({\bx^*})^{\mbox{\tiny H}} \bx^*$ is given by:
$$
({\bx^*})^{\mbox{\tiny H}} \bx^* = {\bf b}^{\mbox{\tiny H}}{\bf b}\boldsymbol{1}_{\left\{{\bf b}^{\mbox{\tiny H}}{\bf b}\leq P_a\right\}} +P_a \boldsymbol{1}_{\left\{{\bf b}^{\mbox{\tiny H}}{\bf b}\geq P_a\right\}}
$$
where ${\bf 1}_{A}$ denotes the indicator function of set $A$. It follows, thus, that the average power efficiency writes as: 
\begin{equation}
\eta_t=\eta_a\left(\int_{\frac{P_a}{\gamma}}^{\infty} f_Z(z)dz+\frac{\gamma}{P_a}\int_{0}^{\frac{P_a}{\gamma}}z f_Z(z)dz\right).
\label{efficiency}
\end{equation}
where $Z\triangleq\frac{1}{\gamma}{\bf b}^{\mbox{\tiny H}}{\bf b}={\bf u}^{\mbox{\tiny H}}\left({\bf H}^{\mbox \tiny H}{\bf H}\right)^{-1}{\bf u}$ and $f_Z$ denotes its probability density function (pdf). Since ${\bf H}$ is unitarily invariant, $Z\stackrel{d}{u}\frac{X}{Y}$ where $X=\frac{1}{K}{\bf u}^{\mbox{\tiny H}}{\bf u}$ and $Y=\frac{1}{K}\frac{1}{\left[\left({\bf H}^{\mbox{\tiny H}}{\bf H}\right)^{-1}\right]_{1,1}}$. 
In case of uncorrelated channels ($\bR=\bI_M$), variable $Z$ is known to follow a scaled $\mathcal{F}$-distribution \cite{Sadaghat2014a,HOC02}. However, when the channels are correlated, finding a closed-form expressions for $Z$ might be not possible, the main difficulty lying in the characterization of the distribution $Y$. To handle this case, we will use the asymptotic approximation developed in \cite{Moustakas2013}. Prior to that, we shall define the following notations.
For $s\geq 0$, we define 
$I_{\rm erg}(s)$ as:
\begin{equation}
I_{erg}(s)=\ln \det \left[\bI_M+\bR \left(s+\frac{1}{t(s)}\right)\right]+K[\ln(t(s))-1],
\label{I_{erg}}
\end{equation}
where $t(s)$  is the unique solution to:
$$
1=\frac{1}{K}\sum_{i=1}^M \frac{R_i}{t(s)(1+R_i s)+R_i}
$$
For $y>0$, define $s_0$ as the unique positive solution in $s$ to the following equation:
$$
1=\frac{1}{K}\sum_{i=1}^M \frac{R_i}{y(1+R_i s)+R_i}
$$
Define $v_1, v_2$ as:
\small
\begin{align*}
v_1&=-\ln|1-M_{t1}M_{r1}|-\ln|1-M_{t2}M_{r2}|+2-\ln|1-M_{t3}M_{r3}|,\\
v_2&=-\ln \left| \frac{M_{r1}}{1-M_{t1}M_{r1}} \right|,
\end{align*}
with
\begin{align*}
M_{t1}&=\frac{1}{[y]^2},\\
    M_{t2}&=\frac{1}{[t(0)]^2},\\
    M_{t3}&=\frac{1}{[t(0) y]},\\
M_{r1}&=\frac{1}{K}\tr\left( \bR^2\left[\bI_M+\bR\left(s_0+\frac{1}{y}\right)\right]^{-2}\right),\\
    M_{r2}&=\frac{1}{K}\tr\left( \bR^2\left[\bI_M+\bR\left(\frac{1}{t(0)}\right)\right]^{-2}\right),\\
    M_{r3}&=\frac{1}{K}\tr\left( \bR^2\left[\bI_M+\bR\left(\frac{1}{t(0)}\right)\right]^{-1}\right.
\left.\left[\bI_M+\bR\left(s_0+\frac{1}{y}\right)\right]^{-1}\right).
\end{align*}
\normalsize
Then, 
up to an error of order $\mathcal{O}(K^{-1})$, the asymptotic pdf of $Y$ is given by \cite{Moustakas2013}: 
\begin{equation}
f_Y(y)=\sqrt{\frac{K}{2\pi}}e^{\left(K s_0 y-I_{erg}(s_0)+I_{erg}(0)+\frac{v_1+v_2}{2}\right)},
\end{equation}
while the asymptotic pdf of $Z$ can be evaluated numerically using the following formula: 
\begin{equation}
f_Z(z)=\int_{0}^{\infty} y f_X(zy)f_Y(y) dy.
\end{equation}
 With the distribution of $Z$ on hand,  the power efficiency $\eta_t$ in \eqref{efficiency} of this precoding scheme  is in its turn computed numerically.  

\section{Numerical results}

In this section, numerical results are presented in order to illustrate the valuable implications of our theoretical results. Moreover, Montecarlo simulations are used to validate the analysis in the asymptotic regime. In all figures, the results are obtained by averaging over $1000$ different channel realizations. In all  simulations, the covariance matrix is modeled as:
 \begin{align}
\left[{\bR}\right]_{i,j}=\left\{\begin{array}{ll} a^{j-i} \hspace{1cm}&\textnormal{if} \hspace{0.2cm} i\leq j\\ (a^{j-i})^*, \hspace{1cm}&\textnormal{if} \hspace{0.2cm} i>j,\end{array}\right.
  \end{align}
where $a \in \mathbb{C}$ is the (complex) correlation coefficient of neighboring antenna and $x^*$ denotes the complex conjugate of $x$. This is known as the exponential correlation model \cite{Loyka2001a}. Without loss of generality, we assume that $P_a=1$. 

We first consider a massive MIMO system with 80 antennas at the BS and 40 UEs. Fig. \ref{fig1} plots the SINR at left Y-axis and the power efficiency $\eta_t$ at the right Y-axis versus $\frac{\frac{1}{c}-1}{\gamma}$ for different values of the noise variance and different covariance matrices. It is important to note from this figure that there exists an optimal $\gamma$ that maximizes the SINR and which depend on the noise variance, the channel covariance matrix and the ratio between $K$ and $M$. Given all the required parameters, we are able to determine the optimal $\gamma$ as shown in the previous section.
Moreover, we observe that the power efficiency increases slightly with the correlation coefficient. However, the increase in power efficiency is negligible compared to the loss on SINR.

In fig. \ref{fig2}, we plot the SINR versus the number of the BS antennas $M$ for a fixed number of UEs $K=40$. We compare the performance of the considered precoding technique when the optimal $\gamma$ is used and when an arbitrary $\gamma$ is chosen. It is observed that our proposed $\gamma$ provides a considerable increase in performance. 

Fig. \ref{fig3} plots the maximum SINR of the single-RF system and the SINR of classical MIMO with MMSE precoding \cite{Peel2005} at left Y-axis versus M when $K=10$ and $\sigma^2=1$. The PAPR of the single-RF transmitter is plotted at the right Y-axis. It is observed that the single-RF transmitter has very low PAPR and exhibits an interesting gain in terms of SINR compared with the optimal MMSE precoding. The proposed precoding reduces slightly the performance while reducing significantly the PAPR.

\begin{figure}
  \centering
   \begin{tikzpicture}[scale=0.8,font=\small]
    \renewcommand{\axisdefaulttryminticks}{4}
    \tikzstyle{every major grid}+=[style=densely dashed]
    \tikzstyle{every axis y label}+=[yshift=0pt]
    \tikzstyle{every axis x label}+=[yshift=0pt]
    \tikzstyle{every axis legend}+=[cells={anchor=west},fill=white,
        at={(0.215,0.40)}, anchor=north west, font=\tiny ]
    \begin{axis}[
      xmin=-14,
      ymin=-1,
      xmax=10,
      ymax=12.5,
      grid=major,
      scaled ticks=true,
   			xlabel={$\frac{\frac{1}{c}-1}{\gamma}$ [dB]},
   			ylabel={SINR [dB] }			
      ]
  
          \addplot[color=blue,mark size=1pt,mark=triangle,line width=0.6pt] coordinates{

(-14,2.627551e+00)(-1.375000e+01,2.702377e+00)(-1.350000e+01,2.794968e+00)(-1.325000e+01,2.835663e+00)(-13,2.949810e+00)(-1.275000e+01,3.015732e+00)(-1.250000e+01,3.128137e+00)(-1.225000e+01,3.190814e+00)(-12,3.361038e+00)(-1.175000e+01,3.453107e+00)(-1.150000e+01,3.629699e+00)(-1.125000e+01,3.595979e+00)(-11,3.810819e+00)(-1.075000e+01,3.903030e+00)(-1.050000e+01,4.052537e+00)(-1.025000e+01,4.189719e+00)(-10,4.357549e+00)(-9.750000e+00,4.415375e+00)(-9.500000e+00,4.570210e+00)(-9.250000e+00,4.693467e+00)(-9,4.908842e+00)(-8.750000e+00,5.095705e+00)(-8.500000e+00,5.246731e+00)(-8.250000e+00,5.385481e+00)(-8,5.583286e+00)(-7.750000e+00,5.808814e+00)(-7.500000e+00,6.025694e+00)(-7.250000e+00,6.155019e+00)(-7,6.374731e+00)(-6.750000e+00,6.618178e+00)(-6.500000e+00,6.814166e+00)(-6.250000e+00,6.986377e+00)(-6,7.312581e+00)(-5.750000e+00,7.475089e+00)(-5.500000e+00,7.785337e+00)(-5.250000e+00,8.100760e+00)(-5,8.405279e+00)(-4.750000e+00,8.517550e+00)(-4.500000e+00,8.885108e+00)(-4.250000e+00,9.124589e+00)(-4,9.453797e+00)(-3.750000e+00,9.814213e+00)(-3.500000e+00,1.008634e+01)(-3.250000e+00,1.037806e+01)(-3,1.071506e+01)(-2.750000e+00,1.092318e+01)(-2.500000e+00,1.119505e+01)(-2.250000e+00,1.158225e+01)(-2,1.176761e+01)(-1.750000e+00,1.188325e+01)(-1.500000e+00,1.198100e+01)(-1.250000e+00,1.202983e+01)(-1,1.211986e+01)(-7.500000e-01,1.214033e+01)(-5.000000e-01,1.201135e+01)(-2.500000e-01,1.191572e+01)(0,1.183355e+01)(2.500000e-01,1.163843e+01)(5.000000e-01,1.143117e+01)(7.500000e-01,1.119959e+01)(1,1.097985e+01)(1.250000e+00,1.073814e+01)(1.500000e+00,1.049354e+01)(1.750000e+00,1.024743e+01)(2,9.998598e+00)(2.250000e+00,9.749577e+00)(2.500000e+00,9.499487e+00)(2.750000e+00,9.249578e+00)(3,8.999899e+00)(3.250000e+00,8.750000e+00)(3.500000e+00,8.500000e+00)(3.750000e+00,8.250000e+00)(4,8.000000e+00)(4.250000e+00,7.750000e+00)(4.500000e+00,7.500000e+00)(4.750000e+00,7.250000e+00)(5,7)(5.250000e+00,6.750000e+00)(5.500000e+00,6.500000e+00)(5.750000e+00,6.250000e+00)(6,6.000000e+00)(6.250000e+00,5.750000e+00)(6.500000e+00,5.500000e+00)(6.750000e+00,5.250000e+00)(7,5)(7.250000e+00,4.750000e+00)(7.500000e+00,4.500000e+00)(7.750000e+00,4.250000e+00)(8,4.000000e+00)(8.250000e+00,3.750000e+00)(8.500000e+00,3.500000e+00)(8.750000e+00,3.250000e+00)(9,3)(9.250000e+00,2.750000e+00)(9.500000e+00,2.500000e+00)(9.750000e+00,2.250000e+00)(10,2.000000e+00)
         		};
\addlegendentry{ $a=0 , \ \ \frac{1}{\sigma^2}=12dB$}
                      \addplot[color=blue,mark size=1pt,mark=square,line width=0.6pt] coordinates{

(-14,2.559186e+00)(-1.375000e+01,2.624790e+00)(-1.350000e+01,2.716921e+00)(-1.325000e+01,2.748337e+00)(-13,2.858476e+00)(-1.275000e+01,2.919158e+00)(-1.250000e+01,3.025736e+00)(-1.225000e+01,3.082654e+00)(-12,3.246773e+00)(-1.175000e+01,3.331598e+00)(-1.150000e+01,3.492824e+00)(-1.125000e+01,3.463328e+00)(-11,3.663978e+00)(-1.075000e+01,3.742207e+00)(-1.050000e+01,3.876238e+00)(-1.025000e+01,4.010238e+00)(-10,4.166903e+00)(-9.750000e+00,4.220302e+00)(-9.500000e+00,4.362510e+00)(-9.250000e+00,4.475404e+00)(-9,4.672200e+00)(-8.750000e+00,4.845696e+00)(-8.500000e+00,4.979171e+00)(-8.250000e+00,5.097735e+00)(-8,5.281740e+00)(-7.750000e+00,5.488957e+00)(-7.500000e+00,5.684169e+00)(-7.250000e+00,5.803421e+00)(-7,5.994759e+00)(-6.750000e+00,6.211207e+00)(-6.500000e+00,6.376693e+00)(-6.250000e+00,6.516999e+00)(-6,6.816089e+00)(-5.750000e+00,6.961867e+00)(-5.500000e+00,7.241689e+00)(-5.250000e+00,7.519127e+00)(-5,7.778200e+00)(-4.750000e+00,7.851202e+00)(-4.500000e+00,8.181038e+00)(-4.250000e+00,8.406710e+00)(-4,8.672893e+00)(-3.750000e+00,9.017384e+00)(-3.500000e+00,9.252420e+00)(-3.250000e+00,9.499700e+00)(-3,9.856938e+00)(-2.750000e+00,9.993410e+00)(-2.500000e+00,1.025912e+01)(-2.250000e+00,1.062893e+01)(-2,1.083230e+01)(-1.750000e+00,1.098558e+01)(-1.500000e+00,1.112996e+01)(-1.250000e+00,1.122255e+01)(-1,1.140858e+01)(-7.500000e-01,1.152146e+01)(-5.000000e-01,1.146977e+01)(-2.500000e-01,1.146049e+01)(0,1.151778e+01)(2.500000e-01,1.138040e+01)(5.000000e-01,1.125085e+01)(7.500000e-01,1.106315e+01)(1,1.090326e+01)(1.250000e+00,1.068006e+01)(1.500000e+00,1.046256e+01)(1.750000e+00,1.022815e+01)(2,9.988300e+00)(2.250000e+00,9.744742e+00)(2.500000e+00,9.495623e+00)(2.750000e+00,9.248842e+00)(3,8.999324e+00)(3.250000e+00,8.749758e+00)(3.500000e+00,8.499772e+00)(3.750000e+00,8.249955e+00)(4,8.000000e+00)(4.250000e+00,7.750000e+00)(4.500000e+00,7.500000e+00)(4.750000e+00,7.250000e+00)(5,7)(5.250000e+00,6.750000e+00)(5.500000e+00,6.500000e+00)(5.750000e+00,6.250000e+00)(6,6.000000e+00)(6.250000e+00,5.750000e+00)(6.500000e+00,5.500000e+00)(6.750000e+00,5.250000e+00)(7,5)(7.250000e+00,4.750000e+00)(7.500000e+00,4.500000e+00)(7.750000e+00,4.250000e+00)(8,4.000000e+00)(8.250000e+00,3.750000e+00)(8.500000e+00,3.500000e+00)(8.750000e+00,3.250000e+00)(9,3)(9.250000e+00,2.750000e+00)(9.500000e+00,2.500000e+00)(9.750000e+00,2.250000e+00)(10,2.000000e+00)
         		};
\addlegendentry{ $a=0.4, \ \ \frac{1}{\sigma^2}=12dB$}      
 
 \addplot[color=blue,mark size=1pt,mark=point,line width=0.6pt] coordinates{
(-14,2.518376e+00)(-1.375000e+01,2.549450e+00)(-1.350000e+01,2.708652e+00)(-1.325000e+01,2.739904e+00)(-13,2.883294e+00)(-1.275000e+01,2.977756e+00)(-1.250000e+01,3.014838e+00)(-1.225000e+01,3.137883e+00)(-12,3.178854e+00)(-1.175000e+01,3.277670e+00)(-1.150000e+01,3.423050e+00)(-1.125000e+01,3.444139e+00)(-11,3.595972e+00)(-1.075000e+01,3.726461e+00)(-1.050000e+01,3.797862e+00)(-1.025000e+01,3.902100e+00)(-10,4.002292e+00)(-9.750000e+00,4.133001e+00)(-9.500000e+00,4.279429e+00)(-9.250000e+00,4.398455e+00)(-9,4.501543e+00)(-8.750000e+00,4.620981e+00)(-8.500000e+00,4.799065e+00)(-8.250000e+00,4.952847e+00)(-8,5.035063e+00)(-7.750000e+00,5.236101e+00)(-7.500000e+00,5.395563e+00)(-7.250000e+00,5.453939e+00)(-7,5.653946e+00)(-6.750000e+00,5.889167e+00)(-6.500000e+00,5.972504e+00)(-6.250000e+00,6.188782e+00)(-6,6.335983e+00)(-5.750000e+00,6.524628e+00)(-5.500000e+00,6.671722e+00)(-5.250000e+00,6.801171e+00)(-5,6.979632e+00)(-4.750000e+00,7.151328e+00)(-4.500000e+00,7.287715e+00)(-4.250000e+00,7.502036e+00)(-4,7.597134e+00)(-3.750000e+00,7.747408e+00)(-3.500000e+00,7.924422e+00)(-3.250000e+00,8.021595e+00)(-3,8.179173e+00)(-2.750000e+00,8.255300e+00)(-2.500000e+00,8.319442e+00)(-2.250000e+00,8.381126e+00)(-2,8.294298e+00)(-1.750000e+00,8.365319e+00)(-1.500000e+00,8.381192e+00)(-1.250000e+00,8.356915e+00)(-1,8.295283e+00)(-7.500000e-01,8.181694e+00)(-5.000000e-01,8.100480e+00)(-2.500000e-01,7.967044e+00)(0,7.776626e+00)(2.500000e-01,7.586562e+00)(5.000000e-01,7.380814e+00)(7.500000e-01,7.183100e+00)(1,6.955821e+00)(1.250000e+00,6.719156e+00)(1.500000e+00,6.482326e+00)(1.750000e+00,6.240717e+00)(2,5.994426e+00)(2.250000e+00,5.747062e+00)(2.500000e+00,5.499026e+00)(2.750000e+00,5.249844e+00)(3,4.999724e+00)(3.250000e+00,4.749997e+00)(3.500000e+00,4.499985e+00)(3.750000e+00,4.250000e+00)(4,4.000000e+00)(4.250000e+00,3.750000e+00)(4.500000e+00,3.500000e+00)(4.750000e+00,3.249998e+00)(5,3.000000e+00)(5.250000e+00,2.750000e+00)(5.500000e+00,2.500000e+00)(5.750000e+00,2.250000e+00)(6,2.000000e+00)(6.250000e+00,1.750000e+00)(6.500000e+00,1.500000e+00)(6.750000e+00,1.250000e+00)(7,1.000000e+00)(7.250000e+00,7.500000e-01)(7.500000e+00,5.000000e-01)(7.750000e+00,2.500000e-01)(8,0)(8.250000e+00,-2.500000e-01)(8.500000e+00,-5.000000e-01)(8.750000e+00,-7.500000e-01)(9,-1.000000e+00)(9.250000e+00,-1.250000e+00)(9.500000e+00,-1.500000e+00)(9.750000e+00,-1.750000e+00)(10,-2.000000e+00)
	};
\addlegendentry{ $a=0,\ \ \ \ \frac{1}{\sigma^2}=8dB$}  
        \addplot[color=blue,mark size=1pt,mark=x,line width=0.6pt] coordinates{
(-14,2.584258e+00)(-1.375000e+01,2.626578e+00)(-1.350000e+01,2.791202e+00)(-1.325000e+01,2.828085e+00)(-13,2.977923e+00)(-1.275000e+01,3.076630e+00)(-1.250000e+01,3.115588e+00)(-1.225000e+01,3.244470e+00)(-12,3.293421e+00)(-1.175000e+01,3.397466e+00)(-1.150000e+01,3.549968e+00)(-1.125000e+01,3.583174e+00)(-11,3.738480e+00)(-1.075000e+01,3.881576e+00)(-1.050000e+01,3.963871e+00)(-1.025000e+01,4.068855e+00)(-10,4.189059e+00)(-9.750000e+00,4.326766e+00)(-9.500000e+00,4.480546e+00)(-9.250000e+00,4.612772e+00)(-9,4.735128e+00)(-8.750000e+00,4.860786e+00)(-8.500000e+00,5.055790e+00)(-8.250000e+00,5.225450e+00)(-8,5.328578e+00)(-7.750000e+00,5.532457e+00)(-7.500000e+00,5.712373e+00)(-7.250000e+00,5.792499e+00)(-7,5.998220e+00)(-6.750000e+00,6.264475e+00)(-6.500000e+00,6.368843e+00)(-6.250000e+00,6.602559e+00)(-6,6.773950e+00)(-5.750000e+00,6.979961e+00)(-5.500000e+00,7.150016e+00)(-5.250000e+00,7.307775e+00)(-5,7.490458e+00)(-4.750000e+00,7.697029e+00)(-4.500000e+00,7.854096e+00)(-4.250000e+00,8.083519e+00)(-4,8.188522e+00)(-3.750000e+00,8.352078e+00)(-3.500000e+00,8.538301e+00)(-3.250000e+00,8.613708e+00)(-3,8.767143e+00)(-2.750000e+00,8.845173e+00)(-2.500000e+00,8.882493e+00)(-2.250000e+00,8.917552e+00)(-2,8.835019e+00)(-1.750000e+00,8.853453e+00)(-1.500000e+00,8.822715e+00)(-1.250000e+00,8.746154e+00)(-1,8.627946e+00)(-7.500000e-01,8.470806e+00)(-5.000000e-01,8.332858e+00)(-2.500000e-01,8.145321e+00)(0,7.920949e+00)(2.500000e-01,7.696050e+00)(5.000000e-01,7.462849e+00)(7.500000e-01,7.233616e+00)(1,6.991412e+00)(1.250000e+00,6.744744e+00)(1.500000e+00,6.497406e+00)(1.750000e+00,6.248751e+00)(2,5.999372e+00)(2.250000e+00,5.749725e+00)(2.500000e+00,5.499950e+00)(2.750000e+00,5.250000e+00)(3,4.999999e+00)(3.250000e+00,4.750000e+00)(3.500000e+00,4.500000e+00)(3.750000e+00,4.250000e+00)(4,4)(4.250000e+00,3.750000e+00)(4.500000e+00,3.500000e+00)(4.750000e+00,3.250000e+00)(5,3.000000e+00)(5.250000e+00,2.750000e+00)(5.500000e+00,2.500000e+00)(5.750000e+00,2.250000e+00)(6,2.000000e+00)(6.250000e+00,1.750000e+00)(6.500000e+00,1.500000e+00)(6.750000e+00,1.250000e+00)(7,1.000000e+00)(7.250000e+00,7.500000e-01)(7.500000e+00,5.000000e-01)(7.750000e+00,2.500000e-01)(8,0)(8.250000e+00,-2.500000e-01)(8.500000e+00,-5.000000e-01)(8.750000e+00,-7.500000e-01)(9,-1.000000e+00)(9.250000e+00,-1.250000e+00)(9.500000e+00,-1.500000e+00)(9.750000e+00,-1.750000e+00)(10,-2.000000e+00)

	};
\addlegendentry{ $a=0.4,\ \ \frac{1}{\sigma^2}=8dB$}     
 \addlegendimage{/pgfplots/refstyle=plot_1}\addlegendentry{  $\eta_t, \ \ a=0$}  
  \addlegendimage{/pgfplots/refstyle=plot_0}\addlegendentry{  $\eta_t, \ \ a=0.4$}

       \end{axis}
       
        \begin{axis}[
        axis y line=right,
               axis x line=none,
      xmin=-14,
      ymin=0,
      xmax=10,
      ymax=1,
      grid=major,
      scaled ticks=true,
   			xlabel={},
   			ylabel={Power Efficiency  $\eta_t$ [\%]}			
      ]

        \addplot[color=red,mark size=1pt,mark=point,line width=0.6pt] coordinates{
      
(-14,8.008070e-01)(-1.390000e+01,8.008070e-01)(-1.380000e+01,8.008070e-01)(-1.370000e+01,8.008070e-01)(-1.360000e+01,8.008070e-01)(-1.350000e+01,8.008070e-01)(-1.340000e+01,8.008070e-01)(-1.330000e+01,8.008070e-01)(-1.320000e+01,8.008070e-01)(-1.310000e+01,8.008070e-01)(-13,8.008070e-01)(-1.290000e+01,8.008070e-01)(-1.280000e+01,8.008070e-01)(-1.270000e+01,8.008070e-01)(-1.260000e+01,8.008070e-01)(-1.250000e+01,8.008070e-01)(-1.240000e+01,8.008070e-01)(-1.230000e+01,8.008070e-01)(-1.220000e+01,8.008070e-01)(-1.210000e+01,8.008070e-01)(-12,8.008070e-01)(-1.190000e+01,8.008070e-01)(-1.180000e+01,8.008070e-01)(-1.170000e+01,8.008070e-01)(-1.160000e+01,8.008070e-01)(-1.150000e+01,8.008070e-01)(-1.140000e+01,8.008070e-01)(-1.130000e+01,8.008070e-01)(-1.120000e+01,8.008070e-01)(-1.110000e+01,8.008070e-01)(-11,8.008070e-01)(-1.090000e+01,8.008070e-01)(-1.080000e+01,8.008070e-01)(-1.070000e+01,8.008070e-01)(-1.060000e+01,8.008070e-01)(-1.050000e+01,8.008070e-01)(-1.040000e+01,8.008070e-01)(-1.030000e+01,8.008070e-01)(-1.020000e+01,8.008070e-01)(-1.010000e+01,8.008070e-01)(-10,8.008070e-01)(-9.900000e+00,8.008070e-01)(-9.800000e+00,8.008070e-01)(-9.700000e+00,8.008070e-01)(-9.600000e+00,8.008070e-01)(-9.500000e+00,8.008070e-01)(-9.400000e+00,8.008070e-01)(-9.300000e+00,8.008070e-01)(-9.200000e+00,8.008070e-01)(-9.100000e+00,8.008070e-01)(-9,8.008070e-01)(-8.900000e+00,8.008070e-01)(-8.800000e+00,8.008070e-01)(-8.700000e+00,8.008070e-01)(-8.600000e+00,8.008070e-01)(-8.500000e+00,8.008070e-01)(-8.400000e+00,8.008070e-01)(-8.300000e+00,8.008070e-01)(-8.200000e+00,8.008070e-01)(-8.100000e+00,8.008070e-01)(-8,8.008070e-01)(-7.900000e+00,8.008070e-01)(-7.800000e+00,8.008070e-01)(-7.700000e+00,8.008070e-01)(-7.600000e+00,8.008070e-01)(-7.500000e+00,8.008070e-01)(-7.400000e+00,8.008070e-01)(-7.300000e+00,8.008070e-01)(-7.200000e+00,8.008070e-01)(-7.100000e+00,8.008070e-01)(-7,8.008070e-01)(-6.900000e+00,8.008070e-01)(-6.800000e+00,8.008070e-01)(-6.700000e+00,8.008070e-01)(-6.600000e+00,8.008070e-01)(-6.500000e+00,8.008070e-01)(-6.400000e+00,8.008070e-01)(-6.300000e+00,8.008070e-01)(-6.200000e+00,8.008070e-01)(-6.100000e+00,8.008070e-01)(-6,8.008070e-01)(-5.900000e+00,8.008070e-01)(-5.800000e+00,8.008070e-01)(-5.700000e+00,8.008070e-01)(-5.600000e+00,8.008070e-01)(-5.500000e+00,8.008070e-01)(-5.400000e+00,8.008070e-01)(-5.300000e+00,8.008070e-01)(-5.200000e+00,8.008070e-01)(-5.100000e+00,8.008070e-01)(-5,8.008070e-01)(-4.900000e+00,8.008070e-01)(-4.800000e+00,8.008070e-01)(-4.700000e+00,8.008070e-01)(-4.600000e+00,8.008069e-01)(-4.500000e+00,8.008069e-01)(-4.400000e+00,8.008069e-01)(-4.300000e+00,8.008069e-01)(-4.200000e+00,8.008069e-01)(-4.100000e+00,8.008069e-01)(-4,8.008069e-01)(-3.900000e+00,8.008069e-01)(-3.800000e+00,8.008069e-01)(-3.700000e+00,8.008068e-01)(-3.600000e+00,8.008068e-01)(-3.500000e+00,8.008067e-01)(-3.400000e+00,8.008065e-01)(-3.300000e+00,8.008063e-01)(-3.200000e+00,8.008060e-01)(-3.100000e+00,8.008055e-01)(-3,8.008047e-01)(-2.900000e+00,8.008036e-01)(-2.800000e+00,8.008020e-01)(-2.700000e+00,8.007997e-01)(-2.600000e+00,8.007964e-01)(-2.500000e+00,8.007916e-01)(-2.400000e+00,8.007850e-01)(-2.300000e+00,8.007756e-01)(-2.200000e+00,8.007626e-01)(-2.100000e+00,8.007446e-01)(-2,8.007200e-01)(-1.900000e+00,8.006866e-01)(-1.800000e+00,8.006416e-01)(-1.700000e+00,8.005814e-01)(-1.600000e+00,8.005017e-01)(-1.500000e+00,8.003970e-01)(-1.400000e+00,8.002604e-01)(-1.300000e+00,8.000837e-01)(-1.200000e+00,7.998570e-01)(-1.100000e+00,7.995686e-01)(-1,7.992046e-01)(-9.000000e-01,7.987488e-01)(-8.000000e-01,7.981829e-01)(-7.000000e-01,7.974857e-01)(-6.000000e-01,7.966339e-01)(-5.000000e-01,7.956015e-01)(-4.000000e-01,7.943602e-01)(-3.000000e-01,7.928797e-01)(-2.000000e-01,7.911278e-01)(-1.000000e-01,7.890711e-01)(0,7.866756e-01)(1.000000e-01,7.839069e-01)(2.000000e-01,7.807317e-01)(3.000000e-01,7.771180e-01)(4.000000e-01,7.730361e-01)(5.000000e-01,7.684596e-01)(6.000000e-01,7.633663e-01)(7.000000e-01,7.577387e-01)(8.000000e-01,7.515646e-01)(9.000000e-01,7.448383e-01)(1,7.375602e-01)(1.100000e+00,7.297374e-01)(1.200000e+00,7.213837e-01)(1.300000e+00,7.125192e-01)(1.400000e+00,7.031701e-01)(1.500000e+00,6.933684e-01)(1.600000e+00,6.831505e-01)(1.700000e+00,6.725571e-01)(1.800000e+00,6.616317e-01)(1.900000e+00,6.504203e-01)(2,6.389697e-01)(2.100000e+00,6.273271e-01)(2.200000e+00,6.155390e-01)(2.300000e+00,6.036505e-01)(2.400000e+00,5.917045e-01)(2.500000e+00,5.797412e-01)(2.600000e+00,5.677977e-01)(2.700000e+00,5.559077e-01)(2.800000e+00,5.441012e-01)(2.900000e+00,5.324047e-01)(3,5.208408e-01)(3.100000e+00,5.094289e-01)(3.200000e+00,4.981849e-01)(3.300000e+00,4.871218e-01)(3.400000e+00,4.762497e-01)(3.500000e+00,4.655762e-01)(3.600000e+00,4.551068e-01)(3.700000e+00,4.448451e-01)(3.800000e+00,4.347930e-01)(3.900000e+00,4.249512e-01)(4,4.153193e-01)(4.100000e+00,4.058958e-01)(4.200000e+00,3.966786e-01)(4.300000e+00,3.876651e-01)(4.400000e+00,3.788524e-01)(4.500000e+00,3.702369e-01)(4.600000e+00,3.618151e-01)(4.700000e+00,3.535833e-01)(4.800000e+00,3.455376e-01)(4.900000e+00,3.376742e-01)(5,3.299891e-01)(5.100000e+00,3.224786e-01)(5.200000e+00,3.151387e-01)(5.300000e+00,3.079657e-01)(5.400000e+00,3.009558e-01)(5.500000e+00,2.941054e-01)(5.600000e+00,2.874109e-01)(5.700000e+00,2.808687e-01)(5.800000e+00,2.744754e-01)(5.900000e+00,2.682276e-01)(6,2.621220e-01)(6.100000e+00,2.561554e-01)(6.200000e+00,2.503246e-01)(6.300000e+00,2.446265e-01)(6.400000e+00,2.390581e-01)(6.500000e+00,2.336165e-01)(6.600000e+00,2.282987e-01)(6.700000e+00,2.231020e-01)(6.800000e+00,2.180236e-01)(6.900000e+00,2.130608e-01)(7,2.082109e-01)(7.100000e+00,2.034715e-01)(7.200000e+00,1.988399e-01)(7.300000e+00,1.943137e-01)(7.400000e+00,1.898906e-01)(7.500000e+00,1.855682e-01)(7.600000e+00,1.813441e-01)(7.700000e+00,1.772162e-01)(7.800000e+00,1.731823e-01)(7.900000e+00,1.692402e-01)(8,1.653878e-01)(8.100000e+00,1.616231e-01)(8.200000e+00,1.579441e-01)(8.300000e+00,1.543489e-01)(8.400000e+00,1.508355e-01)(8.500000e+00,1.474021e-01)(8.600000e+00,1.440468e-01)(8.700000e+00,1.407679e-01)(8.800000e+00,1.375636e-01)(8.900000e+00,1.344323e-01)(9,1.313722e-01)(9.100000e+00,1.283818e-01)(9.200000e+00,1.254595e-01)(9.300000e+00,1.226037e-01)(9.400000e+00,1.198129e-01)(9.500000e+00,1.170856e-01)(9.600000e+00,1.144204e-01)(9.700000e+00,1.118159e-01)(9.800000e+00,1.092706e-01)(9.900000e+00,1.067833e-01)(10,1.043527e-01)

       };\label{plot_0}
       
       \addplot[color=red,mark size=0.5pt,mark=triangle,line width=0.6pt] coordinates{
       (-14,8.000000e-01)(-1.390000e+01,8.000000e-01)(-1.380000e+01,8.000000e-01)(-1.370000e+01,8.000000e-01)(-1.360000e+01,8.000000e-01)(-1.350000e+01,8.000000e-01)(-1.340000e+01,8.000000e-01)(-1.330000e+01,8.000000e-01)(-1.320000e+01,8.000000e-01)(-1.310000e+01,8.000000e-01)(-13,8.000000e-01)(-1.290000e+01,8.000000e-01)(-1.280000e+01,8.000000e-01)(-1.270000e+01,8.000000e-01)(-1.260000e+01,8.000000e-01)(-1.250000e+01,8.000000e-01)(-1.240000e+01,8.000000e-01)(-1.230000e+01,8.000000e-01)(-1.220000e+01,8.000000e-01)(-1.210000e+01,8.000000e-01)(-12,8.000000e-01)(-1.190000e+01,8.000000e-01)(-1.180000e+01,8.000000e-01)(-1.170000e+01,8.000000e-01)(-1.160000e+01,8.000000e-01)(-1.150000e+01,8.000000e-01)(-1.140000e+01,8.000000e-01)(-1.130000e+01,8.000000e-01)(-1.120000e+01,8.000000e-01)(-1.110000e+01,8.000000e-01)(-11,8.000000e-01)(-1.090000e+01,8.000000e-01)(-1.080000e+01,8.000000e-01)(-1.070000e+01,8.000000e-01)(-1.060000e+01,8.000000e-01)(-1.050000e+01,8.000000e-01)(-1.040000e+01,8.000000e-01)(-1.030000e+01,8.000000e-01)(-1.020000e+01,8.000000e-01)(-1.010000e+01,8.000000e-01)(-10,8.000000e-01)(-9.900000e+00,8.000000e-01)(-9.800000e+00,8.000000e-01)(-9.700000e+00,8.000000e-01)(-9.600000e+00,8.000000e-01)(-9.500000e+00,8.000000e-01)(-9.400000e+00,8.000000e-01)(-9.300000e+00,8.000000e-01)(-9.200000e+00,8.000000e-01)(-9.100000e+00,8.000000e-01)(-9,8.000000e-01)(-8.900000e+00,8.000000e-01)(-8.800000e+00,8.000000e-01)(-8.700000e+00,8.000000e-01)(-8.600000e+00,8.000000e-01)(-8.500000e+00,8.000000e-01)(-8.400000e+00,8.000000e-01)(-8.300000e+00,8.000000e-01)(-8.200000e+00,8.000000e-01)(-8.100000e+00,8.000000e-01)(-8,8.000000e-01)(-7.900000e+00,8.000000e-01)(-7.800000e+00,8.000000e-01)(-7.700000e+00,8.000000e-01)(-7.600000e+00,8.000000e-01)(-7.500000e+00,8.000000e-01)(-7.400000e+00,8.000000e-01)(-7.300000e+00,8.000000e-01)(-7.200000e+00,8.000000e-01)(-7.100000e+00,8.000000e-01)(-7,8.000000e-01)(-6.900000e+00,8.000000e-01)(-6.800000e+00,8.000000e-01)(-6.700000e+00,8.000000e-01)(-6.600000e+00,8.000000e-01)(-6.500000e+00,8.000000e-01)(-6.400000e+00,8.000000e-01)(-6.300000e+00,8.000000e-01)(-6.200000e+00,8.000000e-01)(-6.100000e+00,8.000000e-01)(-6,8.000000e-01)(-5.900000e+00,8.000000e-01)(-5.800000e+00,8.000000e-01)(-5.700000e+00,8.000000e-01)(-5.600000e+00,8.000000e-01)(-5.500000e+00,8.000000e-01)(-5.400000e+00,8.000000e-01)(-5.300000e+00,8.000000e-01)(-5.200000e+00,8.000000e-01)(-5.100000e+00,8.000000e-01)(-5,8.000000e-01)(-4.900000e+00,8.000000e-01)(-4.800000e+00,8.000000e-01)(-4.700000e+00,7.999999e-01)(-4.600000e+00,7.999999e-01)(-4.500000e+00,7.999998e-01)(-4.400000e+00,7.999997e-01)(-4.300000e+00,7.999996e-01)(-4.200000e+00,7.999993e-01)(-4.100000e+00,7.999990e-01)(-4,7.999984e-01)(-3.900000e+00,7.999975e-01)(-3.800000e+00,7.999963e-01)(-3.700000e+00,7.999944e-01)(-3.600000e+00,7.999917e-01)(-3.500000e+00,7.999877e-01)(-3.400000e+00,7.999819e-01)(-3.300000e+00,7.999736e-01)(-3.200000e+00,7.999619e-01)(-3.100000e+00,7.999455e-01)(-3,7.999225e-01)(-2.900000e+00,7.998908e-01)(-2.800000e+00,7.998474e-01)(-2.700000e+00,7.997884e-01)(-2.600000e+00,7.997090e-01)(-2.500000e+00,7.996031e-01)(-2.400000e+00,7.994631e-01)(-2.300000e+00,7.992797e-01)(-2.200000e+00,7.990416e-01)(-2.100000e+00,7.987352e-01)(-2,7.983444e-01)(-1.900000e+00,7.978504e-01)(-1.800000e+00,7.972316e-01)(-1.700000e+00,7.964635e-01)(-1.600000e+00,7.955183e-01)(-1.500000e+00,7.943656e-01)(-1.400000e+00,7.929725e-01)(-1.300000e+00,7.913034e-01)(-1.200000e+00,7.893212e-01)(-1.100000e+00,7.869877e-01)(-1,7.842644e-01)(-9.000000e-01,7.811133e-01)(-8.000000e-01,7.774981e-01)(-7.000000e-01,7.733851e-01)(-6.000000e-01,7.687445e-01)(-5.000000e-01,7.635510e-01)(-4.000000e-01,7.577855e-01)(-3.000000e-01,7.514351e-01)(-2.000000e-01,7.444942e-01)(-1.000000e-01,7.369651e-01)(0,7.288577e-01)(1.000000e-01,7.201897e-01)(2.000000e-01,7.109865e-01)(3.000000e-01,7.012800e-01)(4.000000e-01,6.911086e-01)(5.000000e-01,6.805156e-01)(6.000000e-01,6.695484e-01)(7.000000e-01,6.582575e-01)(8.000000e-01,6.466948e-01)(9.000000e-01,6.349128e-01)(1,6.229634e-01)(1.100000e+00,6.108970e-01)(1.200000e+00,5.987614e-01)(1.300000e+00,5.866013e-01)(1.400000e+00,5.744578e-01)(1.500000e+00,5.623678e-01)(1.600000e+00,5.503642e-01)(1.700000e+00,5.384754e-01)(1.800000e+00,5.267257e-01)(1.900000e+00,5.151355e-01)(2,5.037212e-01)(2.100000e+00,4.924961e-01)(2.200000e+00,4.814702e-01)(2.300000e+00,4.706508e-01)(2.400000e+00,4.600430e-01)(2.500000e+00,4.496499e-01)(2.600000e+00,4.394728e-01)(2.700000e+00,4.295118e-01)(2.800000e+00,4.197659e-01)(2.900000e+00,4.102331e-01)(3,4.009110e-01)(3.100000e+00,3.917963e-01)(3.200000e+00,3.828859e-01)(3.300000e+00,3.741758e-01)(3.400000e+00,3.656623e-01)(3.500000e+00,3.573414e-01)(3.600000e+00,3.492090e-01)(3.700000e+00,3.412612e-01)(3.800000e+00,3.334940e-01)(3.900000e+00,3.259032e-01)(4,3.184851e-01)(4.100000e+00,3.112357e-01)(4.200000e+00,3.041513e-01)(4.300000e+00,2.972280e-01)(4.400000e+00,2.904623e-01)(4.500000e+00,2.838507e-01)(4.600000e+00,2.773894e-01)(4.700000e+00,2.710753e-01)(4.800000e+00,2.649049e-01)(4.900000e+00,2.588749e-01)(5,2.529822e-01)(5.100000e+00,2.472236e-01)(5.200000e+00,2.415961e-01)(5.300000e+00,2.360967e-01)(5.400000e+00,2.307225e-01)(5.500000e+00,2.254706e-01)(5.600000e+00,2.203383e-01)(5.700000e+00,2.153228e-01)(5.800000e+00,2.104214e-01)(5.900000e+00,2.056317e-01)(6,2.009509e-01)(6.100000e+00,1.963767e-01)(6.200000e+00,1.919066e-01)(6.300000e+00,1.875383e-01)(6.400000e+00,1.832694e-01)(6.500000e+00,1.790977e-01)(6.600000e+00,1.750209e-01)(6.700000e+00,1.710370e-01)(6.800000e+00,1.671437e-01)(6.900000e+00,1.633390e-01)(7,1.596210e-01)(7.100000e+00,1.559876e-01)(7.200000e+00,1.524369e-01)(7.300000e+00,1.489670e-01)(7.400000e+00,1.455761e-01)(7.500000e+00,1.422624e-01)(7.600000e+00,1.390241e-01)(7.700000e+00,1.358595e-01)(7.800000e+00,1.327670e-01)(7.900000e+00,1.297448e-01)(8,1.267915e-01)(8.100000e+00,1.239053e-01)(8.200000e+00,1.210849e-01)(8.300000e+00,1.183287e-01)(8.400000e+00,1.156352e-01)(8.500000e+00,1.130030e-01)(8.600000e+00,1.104307e-01)(8.700000e+00,1.079170e-01)(8.800000e+00,1.054605e-01)(8.900000e+00,1.030600e-01)(9,1.007140e-01)(9.100000e+00,9.842150e-02)(9.200000e+00,9.618115e-02)(9.300000e+00,9.399180e-02)(9.400000e+00,9.185229e-02)(9.500000e+00,8.976148e-02)(9.600000e+00,8.771826e-02)(9.700000e+00,8.572154e-02)(9.800000e+00,8.377028e-02)(9.900000e+00,8.186344e-02)(10,8.000000e-02)
       };\label{plot_1}

       \end{axis}
       
  \end{tikzpicture} \vskip-3mm
\centering
  \caption{SINR at the left Y-axis and $\eta_t$ at the right Y-axis vs. $\frac{\frac{1}{c}-1}{\gamma}$ when $K=40$ and $M=80$.}
  \label{fig1}
\end{figure}

\begin{figure}
  \centering
   \begin{tikzpicture}[scale=0.8,font=\small]
    \renewcommand{\axisdefaulttryminticks}{4}
    \tikzstyle{every major grid}+=[style=densely dashed]
    \tikzstyle{every axis y label}+=[yshift=0pt]
    \tikzstyle{every axis x label}+=[yshift=0pt]
    \tikzstyle{every axis legend}+=[cells={anchor=west},fill=white,
        at={(0.02,0.98)}, anchor=north west, font=\tiny ]
    \begin{axis}[
      xmin=80,
      ymin=10,
      xmax=120,
      ymax=16,
      grid=major,
      scaled ticks=true,
   			xlabel={Number of BS antennas $M$},
   			ylabel={SINR [dB] }			
      ]
  
          \addplot[color=blue,mark size=1.5pt,mark=triangle,line width=0.8pt] coordinates{
(80,1.249008e+01)(85,1.293078e+01)(90,1.333381e+01)(95,1.370469e+01)(100,1.404790e+01)(105,1.436708e+01)(110,1.466526e+01)(115,1.494493e+01)(120,1.520819e+01)          		};
\addlegendentry{ Optimal $\gamma$}

                   \addplot[color=green,mark size=1.5pt,mark=star,line width=0.8pt] coordinates{    
(80,1.068883e+01)(85,1.133276e+01)(90,1.196018e+01)(95,1.264657e+01)(100,1.316673e+01)(105,1.367022e+01)(110,1.405118e+01)(115,1.437364e+01)(120,1.460691e+01)
      };
                              \addlegendentry{ $\gamma=2$}      
 \addplot[color=magenta,mark size=1.5pt,mark=x,line width=0.8pt] coordinates{    
(80,1.188621e+01)(85,1.242308e+01)(90,1.287670e+01)(95,1.325851e+01)(100,1.345966e+01)(105,1.360530e+01)(110,1.368105e+01)(115,1.372680e+01)(120,1.374497e+01)
         };
                              \addlegendentry{$\gamma=1.5$}
       \end{axis}
  \end{tikzpicture} \vskip-3mm
\centering
  \caption{SINR vs. $M$ when $K=40$, $\frac{1}{\sigma^2}=12dB$ and $\bR=\bI_M$.}
  \label{fig2}
\end{figure}

\begin{figure}
  \centering
   \begin{tikzpicture}[scale=0.8,font=\small]
    \renewcommand{\axisdefaulttryminticks}{4}
    \tikzstyle{every major grid}+=[style=densely dashed]
    \tikzstyle{every axis y label}+=[xshift=0pt]
    \tikzstyle{every axis x label}+=[yshift=0pt]
    \tikzstyle{every axis legend}+=[cells={anchor=west},fill=white,
        at={(0.02,0.98)}, anchor=north west, font=\tiny ]
    \begin{axis}[
    axis y line=right,
      xmin=60,
      ymin=0.1,
      xmax=120,
      ymax=0.5,
      grid=major,
      scaled ticks=true,
   			xlabel={Number of BS antennas $M$},
   			ylabel={PAPR [dB] }			
      ]
         \addplot[color=green,mark size=1.5pt,mark=*,line width=0.8pt] coordinates{
(60,1.208627e-01)(70,1.673120e-01)(80,2.087753e-01)(90,2.343027e-01)(100,2.630976e-01)(110,2.761807e-01)(120,2.964630e-01)(130,3.333905e-01)(140,3.326868e-01)
     		};\label{plot_one}
\addlegendentry{ PAPR, proposed scheme}                                                           
              \end{axis}
              
               \begin{axis}[
               axis y line=left,
               axis x line=none,
                 xmin=60, xmax=120,
           ymin=6,
          ymax=12,
      grid=major,
   			ylabel={SINR [dB] }			
      ]
 
          \addplot[color=blue,mark size=1.5pt,mark=triangle,line width=0.8pt] coordinates{
(60,7.716543e+00)(70,8.367558e+00)(80,8.942825e+00)(90,9.450849e+00)(100,9.910114e+00)(110,1.032979e+01)(120,1.071041e+01)

   		};\addlegendentry{ SINR, MMSE precoder}

  \addplot[color=red,mark size=1.5pt,mark=square,line width=0.8pt] coordinates{

(60,7.280289e+00)(70,7.841417e+00)(80,8.342152e+00)(90,8.790768e+00)(100,9.191022e+00)(110,9.591098e+00)(120,9.912887e+00)

    		};\addlegendentry{ SINR, Proposed technique}

     \addlegendimage{/pgfplots/refstyle=plot_one}\addlegendentry{ PAPR, proposed scheme}                    
              \end{axis}

  \end{tikzpicture} \vskip-3mm
\centering
  \caption{PAPR at the right Y-axis and SINR at the left Y-axis vs. $M$ when $K=10$, $\frac{1}{\sigma^2}=0dB$ and $a=0.1$.}
  \label{fig3}
\end{figure}

\section{Conclusion}
In this paper, we have studied the performance of a recently proposed precoding scheme for single-RF massive MIMO system. Unlike classical methods of precoding design, a more realistic model for power amplifiers is used by considering a peak power constraint. Using tools from random matrix theory, we have derived deterministic approximations of the optimal parameters of the considered precoding technique. It was shown that this precoding scheme has low PAPR and good SINR compared to the classical MMSE precoder.

\section*{Appendix A}
\label{appA}
Starting from \eqref{interference}, 
\begin{align*}
\frac{1}{K}\| \bH^{\mbox{\tiny H}} \bx - \sqrt{\gamma}\bu\|^2&=\frac{\gamma{\rho}^2}{K}\bu^{\mbox{\tiny H}} \left(\frac{\bH^{\mbox{\tiny H}} \bH}{K}+\rho \bI\right)^{-2}  \bu\\
&=\rho\left(\frac{\gamma}{K}  \bu^{\mbox{\tiny H}} \left(\frac{\bH^{\mbox{\tiny H}} \bH}{K}+\rho \bI\right)^{-1}  \bu -P(\rho)\right).
\end{align*}
We have an asymptotic equivalent of $P(\rho)$, it remains to deal with the term $\frac{1}{K}  \bu^{\mbox{\tiny H}} \left(\frac{\bH^{\mbox{\tiny H}} \bH}{K}+\rho \bI\right)^{-1}\bu$. It is known that such quantity converges to its mean. Exploiting this result,
\begin{align*}
&\frac{1}{K} \mathbb{E} \bu^{\mbox{\tiny H}} \left(\frac{\bH^{\mbox{\tiny H}} \bH}{K}+\rho \bI_K\right)^{-1}\bu\\&=\frac{1}{K} \mathbb{E} \tr \left(\frac{\bH^{\mbox{\tiny H}} \bH}{K}+\rho \bI\right)^{-1}\\
&=\frac{1}{K\rho} \mathbb{E} \tr \left(\frac{\bH^{\mbox{\tiny H}} \bH}{\rho K}+ \bI_K\right)^{-1}\\
&=\frac{1}{K\rho} \left[\mathbb{E} \tr \left(\frac{\bH\bH^{\mbox{\tiny H}} }{\rho K}+ \bI_M\right)^{-1}-\frac{(M-K)}{K}\right].\\
\end{align*}
It is known from \cite[Theorem 1]{Kammoun2014a} that for $t>0$,
$$
 \frac{1}{K}\tr \left(\frac{t\bH \bH^{\mbox{\tiny H}}}{ K}+ \bI_M\right)^{-1}- \frac{1}{K}\tr \left(\bT(t)\right)\xrightarrow[K\to \infty]{a.s.}0.
$$
Putting all the above results together, we have
\begin{align*}
&\frac{1}{K}\| \bH^{\mbox{\tiny H}} \bx- \sqrt{\gamma}\bu\|^2-\\&\left(\frac{\gamma}{K}\left(\tr \left[\bT(\frac{1}{\rho})\right]+M-K\right)-\rho \overline P_1(\rho)\right) \xrightarrow[K\to \infty]{a.s.}0.
\end{align*}

\section*{Appendix B}
\label{appB}
To determine the optimal $\gamma$, it suffices to determine the optimal $\rho$ that minimizes the following function
$$
\tilde g(\rho)=\frac{\sigma^2}{P_a\rho^2}\beta(\frac{1}{\rho})+\left(\frac{1}{K}\tr \left(\bT(\frac{1}{\rho})\right)-\frac{M-K}{K}-\frac{1}{\rho} \beta(\frac{1}{\rho})\right).
$$
From fig. 1, we conclude that there exists a unique $\gamma$, and thus a unique $\rho$, that maximize the SINR. Then, it suffices to determine $\rho$ that satisfies $\frac{\partial}{\partial \rho}\tilde g(\rho)=0$.
One can easily show that 
$$
\frac{1}{K}\frac{\partial}{\partial t}\tr \left(\bT(t)\right)=-\beta(t), \ \ \ \forall t>0.
$$
Thus, 
$$
\tilde g'(\rho)=\left(\frac{-2\sigma^2}{P_a\rho^3}+\frac{2}{\rho^2}\right)\beta(\frac{1}{\rho})+\left(\frac{-\sigma^2}{P_a\rho^4}+\frac{1}{\rho^3}\right)\beta'(\frac{1}{\rho}).$$
Then, it can be checked that $g'(\frac{\sigma^2}{P_a})=0$. \\
Thus, $\rho^*=\frac{\sigma^2}{P_a}$ and $\gamma^*=\frac{P_a}{\frac{P_a^2}{\sigma^4}\beta(\frac{P_a}{\sigma^2})}=\frac{1}{\frac{P_a}{\sigma^4}\beta(\frac{P_a}{\sigma^2})}$.

\section*{Appendix C}
\label{appC}
By replacing $\delta$ by $\rho=\frac{\delta}{K}$ in \eqref{P}, we have
\begin{align*}
P(\rho)&=\frac{\gamma}{K} \bu^{\mbox{\tiny H}} \left(\frac{\bH^{\mbox{\tiny H}} \bH}{K}+\rho \bI\right)^{-1}\frac{ \bH^{\mbox{\tiny H}} \bH}{K} \left(\frac{\bH^{\mbox{\tiny H}} \bH}{K}+  \rho \bI\right)^{-1} \bu\\
&=\frac{\gamma}{K}  \bu^{\mbox{\tiny H}} \left(\frac{\bH^{\mbox{\tiny H}} \bH}{K}+\rho \bI\right)^{-1}  \bu -\frac{\gamma\rho}{K} \bu^{\mbox{\tiny H}}\left(\frac{\bH^{\mbox{\tiny H}} \bH}{K}+\rho \bI\right)^{-2} \bu\\
\end{align*}
Let $\bQ(\rho)= \left(\frac{\bH^{\mbox{\tiny H}} \bH}{K}+\rho \bI\right)^{-1}$.
From \cite{COUbook}, we have
$$
\frac{1}{K}  \bu^{\mbox{\tiny H}} \bQ(\rho)  \bu -m(\rho)\xrightarrow[K\to \infty]{a.s.}0
$$
where $$m(\rho)=\frac{-2}{1-c-\rho-\sqrt{(1-c+\rho)^2+4c\rho}} .$$
And by noting that 
$$
\frac{1}{K} \bu^{\mbox{\tiny H}} \left(\frac{\bH^{\mbox{\tiny H}} \bH}{K}+\rho \bI\right)^{-2}  \bu=-\frac{\partial}{\partial\rho}\left[\frac{1}{K}  \bu^H \bQ(\rho)  \bu\right]
$$
Then, we have
\begin{equation}
\frac{1}{K} \bu^{\mbox{\tiny H}} \left(\frac{\bH^{\mbox{\tiny H}} \bH}{K}+\rho \bI\right)^{-2}  \bu+m'(\rho)\xrightarrow[K\to \infty]{a.s.}0,
\label{conver_deriv}
\end{equation}
where $m'(\rho)$ is the first derivative of $m(\rho)$ with respect to $\rho$.
Putting the above results together, we have
$$
P(\rho)-\gamma\left[m(\rho)+\rho m(\rho)\right]\xrightarrow[K\to \infty]{a.s.}0.
$$

\section*{Appendix D}
\label{appD}
The SINR can be written as
$$
 {\rm SINR} =\frac{\frac{\gamma}{K} \bu^{\mbox{\tiny H}}\bu}{\frac{1}{K}\| \bH^{\mbox{\tiny H}} \bx - \sqrt{\gamma}\bu\|^2+\sigma^2}
$$
First, it is known that
$$
\frac{1}{K} \bu^{\mbox{\tiny H}}\bu-1\xrightarrow[K\to \infty]{a.s.}0,
$$
It remains to deal with the interference term,
\begin{align*}
\frac{1}{K}\| \bH^{\mbox{\tiny H}} \bx - \sqrt{\gamma}\bu\|^2=(\bx-\bb)^{\mbox{\tiny H}}\bA(\bx-\bb)
\end{align*}
After some algebraic manipulations, we have
\begin{equation}
\frac{1}{K}\| \bH^{\mbox{\tiny H}} \bx - \sqrt{\gamma}\bu\|^2=\frac{\gamma{\rho}^2}{K}\bu^{\mbox{\tiny H}} \left(\frac{\bH^{\mbox{\tiny H}} \bH}{K}+\rho \bI\right)^{-2}  \bu.\label{interference}
\end{equation}
Thus, using \eqref{conver_deriv},
\begin{align*}
\frac{1}{K}\| \bH^{\mbox{\tiny H}} \bx - \sqrt{\gamma}\bu\|^2+\gamma{\rho}^2 m'(\rho)\xrightarrow[K\to \infty]{a.s.}0,
\end{align*}
Putting the above results together yields,
$$
{\rm SINR}-\frac{1}{-\rho^2 m'(\rho)+\frac{\sigma^2}{\gamma}}\xrightarrow[K\to \infty]{a.s.}0.
$$
Replacing $\gamma$ by $\frac{P_a}{\bar f(\rho)}$ yields the result of proposition \ref{prop2}.

\bibliographystyle{IEEEbib}
\bibliography{IEEEabrv,IEEEconf,tutorial_RMT}

\end{document}